\theoremstyle{plain}
\newtheorem{thm}{Theorem}[section]
\newtheorem{prop}[thm]{Proposition}
\newtheorem{assump}[thm]{Assumption}
\theoremstyle{definition}
\newtheorem{ex}{Example}[section]
\theoremstyle{remark}
\newtheorem{rem}{Remark}[section]
\newcommand{\HRabi}[1]{H_{#1}} % A. Q. Rabi Model Hamiltonian
\newcommand{\Z}{\mathbb{Z}} % integers
\newcommand{\Q}{\mathbb{Q}} % rational numbers
\newcommand{\R}{\mathbb{R}} % real numbers
\newcommand{\C}{\mathbb{C}} % complex numbers
\newcommand{\PRabi}{Z_{\text{R}}}
\newcommand{\PARabi}[1]{Z^{(#1)}_{\text{R}}}
\newcommand{\ZARabi}[1]{\zeta^{(#1)}_{\text{R}}}
\newcommand{\ZRabi}{\zeta_{\text{R}}}
\newcommand{\KRabi}[1]{K_{\text{R}}^{(#1)}}
\newcommand{\KBRabi}[1]{\overline{K}_{\text{R}}^{(#1)}}
\newcommand{\KBRabip}{\overline{K}_{\text{R}}}
\DeclareMathOperator{\sh}{sh}
\DeclareMathOperator{\ch}{ch}
\DeclareMathOperator{\Spec}{Spec}
\newcommand{\e}{\varepsilon}
\title{Zeta limits for the spectrum of quantum Rabi models}
\author{Cid Reyes-Bustos and Masato Wakayama}
\begin{document}

\maketitle

%\dedicatory{Dedicated to XX}

\begin{abstract}

  The quantum Rabi model (QRM), one of the fundamental models used to describe light and matter interaction, 
  has a deep mathematical structure revealed by the study of its spectrum. In this paper,  from the explicit
  formulas for the partition function we directly derive various limits of the spectral zeta function with respect to the systems parameters of the asymmetric quantum Rabi model (AQRM), a generalization obtained by adding a physically significant parameter to the QRM. In particular, we consider the limit corresponding to the growth of the coupling strength to infinity, recently studied using resolvent analysis. The limits obtained in this paper are given
  in terms of the Hurwitz zeta function and other $L$-functions, suggesting further relations between spectral zeta
  function of quantum interaction models and number theory.

\,
 
 \noindent
 {\bf 2010 Mathematics Subject Classification:}
  {\it Primary} 81V80, {\it Secondary} 11M35,  47A10 
 \smallskip

%\MSC{11M41}{11A07, 33C20}
 \noindent
  \textbf{Keywords:} 
  quantum Rabi models, spectral zeta, Riemann zeta, Hurwitz zeta, Dirichlet $L$, hyperelliptic curves,
  heat kernel, partition function.
  \,
\end{abstract}

%------------------------------------------------------
\section{Introduction}
\label{sec:introduction}
%------------------------------------------------------

The quantum Rabi model (QRM) is widely recognized as one of the most fundamental models in quantum optics \cite{JC1963,B2011}. Partly because of its applications for quantum information technologies, it has been in the research spotlight in experimental and theoretical physics in recent years \cite{bcbs2016}. In particular, due to advances in experimental technology, parameters regimes have been achieved in which approximations using the Jaynes-Cummings model \cite{JC1963}, a simpler model with an explicit description of the spectrum, is no longer valid, thus making the study of the QRM and its spectrum essential. Recently, the mathematical aspects underlying its spectrum have also been studied from different points of view, including stochastic analysis (Feynman-Kac's integrals \cite{HH2012, HHL2014}), asymptotic analysis (large eigenvalue asymptotics \cite{BZ2021} and the Weyl law \cite{Sugi2016,RW2021}, see also Braak conjecture on eigenvalue distribution \cite{B2011}), representation theory (generalized orthogonal polynomials, confluent Heun ODE, $\mathfrak{sl}_2$-picture \cite{W2016JPA, KRW2017}, the infinite symmetric group \cite{RW2019}) and number theory (spectral zeta functions and their special values \cite{Sugi2016,RW2021}, geometry of hyperelliptic curves and surfaces \cite{RBW2022}).

A generalization of the QRM of considerable mathematical interest, despite the apparent simplicity of its definition, is the asymmetric quantum Rabi model (AQRM) \cite{B2011}. The Hamiltonian of the AQRM is given by
\[ 
 \HRabi{\e} := \omega a^{\dagger}a + \Delta \sigma_z + g (a + a^{\dagger}) \sigma_x + \e \sigma_x,
\]
where $a^\dag$ and $a$ are the creation and annihilation operators of the bosonic mode,
i.e. $[a,\,a^\dag]=1$ and
\[
\sigma_x = \begin{bmatrix}
 0 & 1  \\
 1 & 0
\end{bmatrix}, \qquad \qquad
\sigma_z= \begin{bmatrix}
 1 & 0  \\
 0 & -1
\end{bmatrix}
\]
are the Pauli matrices, $2\Delta$ is the energy difference between the two levels,
$g$ denotes the coupling strength between the two-level system and the bosonic mode with frequency $\omega$
(subsequently, we set $\omega=1$ without loss of generality) and \(\e\) is a real number. The Hamiltonian of the QRM is 
given by $\HRabi{0}$, that is, it corresponds to the case $\e=0$.

The term $\e \sigma_x$ in $  \HRabi{\e}$ corresponds to spontaneous flips of the two-level system and appears naturally
in implementations with flux qubits \cite{Ni2010} (see also the recent experimental study \cite{Y2017} toward the realization of qubits using superconductivity). Notably, this term breaks the $\Z_2(=\Z/2\Z)$-symmetry of the Hamiltonian of QRM. Prior to the study in \cite{LB2015JPA} it was believed that the AQRM did not posses apparent symmetries, and that no level crossings should be expected in the spectral curve for any $\e \not=0$ (instead, see \cite{K1985JMP} for the case of the QRM). However, level crossings at the quasi-exact (or Juddian) solution  were observed when $\e =\frac12$ in \cite{LB2015JPA} and later proved in \cite{W2016JPA}. Now, it is known \cite{KRW2017} that when $\e $ is a half integer the AQRM has a quasi-exact (exceptional) spectrum that turns to be doubly degenerate (energy level crossing) just like the QRM. Conversely, the full spectrum is simply when $\e \notin \tfrac12 \Z$. 

A relevant problem is the description of the behavior of the spectrum of the AQRM when the value of the coupling
$g$ becomes large (beyond ``deep strong'' coupling regimes \cite{Y2017,YS2018}). The problem has been
considered in physics \cite{Sch1985} by giving a description of the expect ``natural'' behavior, but a 
mathematical proof was not provided.
In the workshop ``Rabi and Spin Boson models'' held in Kyushu University, Fumio Hiroshima presented a study on
the behaviour of the quantum Rabi model in the weak limit $g \to \infty$ by a study of the resolvent \cite{H2023}. The main result is that, in the sense of the resolvent, the spectrum of the QRM Hamiltonian shifted by $g^2$, that is,
\[
 \hat{H} = \HRabi{0} + g^2
\]
converges to that of a pair of quantum harmonic oscillators (QHO). In the case of the AQRM the spectrum
of $\HRabi{\e} + g^2$ in the weak limit is essentially given by that of  the spectrum of the pair of displaced QHO, 
that is,
\[
  a^\dag a + \e, \qquad \qquad  a^\dag a - \e.
\]
In particular, we observe that the limiting eigenvalue distribution for $g\to\infty$ depends only on the flip parameter  $\e$ and not on the
parameter $\Delta$. 

The main purpose of this paper is to give an analogous result in terms of the spectral zeta function of the model using the explicit
formulas for the partition function and the heat kernel in \cite{RW2019,RW2021,R2020}. From the explicit formulas, the limit is computed in an elementary way, highlighting the good convergence properties of the series which describes the partition function and heat kernel formulas. 
The partition function $\PARabi{\e}(\beta;g,\Delta)$ of the AQRM is given by the trace of the Boltzmann factor $e^{-\beta E(\mu)}$,
$E(\mu)$ being the energy of a state $\mu$, that is, 
\[
  \PARabi{\e}(\beta;g,\Delta) := \text{Tr}[e^{-t \HRabi{\e}}]= \sum_{\mu\in \Lambda} e^{-\beta E(\mu)},
\]
where $\Lambda$ denotes the set of all possible (eigen-)states of $\HRabi{\e}$. Then, the  Hurwitz-type spectral zeta function is given by
\begin{align}
  \label{eq:speczetamellin}
   \ZARabi{\e}(s;\tau) := \sum_{j=1}^\infty (\lambda^{(\e)}_j +\tau)^{-s} = \frac1{\Gamma(s)}\int_0^\infty t^{s-1} \PARabi{\e}(t;g,\Delta) e^{-t\tau}dt,
\end{align}
where $\lambda^{(\e)}_i$ are the (ordered) eigenvalues in the spectrum of $\HRabi{\e}$. Note that the multiplicity of each eigenvalue is less than or equal to $2$ (see \cite{KRW2017}). 

The main result of this paper is that after setting $\tau=g^2+N$ for some sufficiently large integer $N\in \Z_{\ge0}$ (with respect to $\Delta$ and $|\e|$), we have
\begin{align*}
  \lim_{g \to\infty}  \ZARabi{\e}(s;\tau) &=  \lim_{g \to\infty} \frac1{\Gamma(s)}\int_0^\infty t^{s-1} \PARabi{\e}(t;g,\Delta) e^{-t\tau}dt \\
                                           &= \zeta(s,N + \e) + \zeta(s,N-\e),
\end{align*}
where $\zeta(s,a)$ is the Hurwitz zeta function. Particularly, under certain physically reasonable conditions on $\Delta$, we have
\[
  \lim_{g \to\infty} \ZARabi{0}(s;g^2+N) = 2\zeta(s),
\]
where $\zeta(s)$ is the Riemann zeta function.

\begin{figure}[ht]
  \centering
  \subfloat[$\HRabi{0}$]{
    \includegraphics[height=4.5cm]{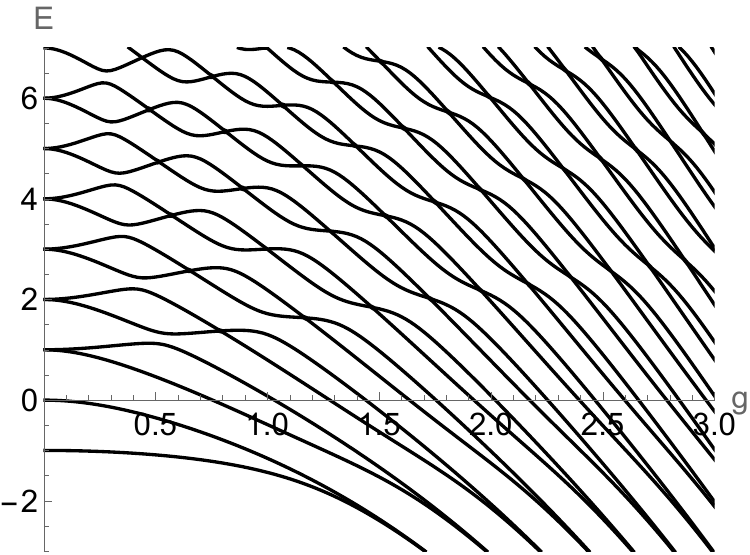}}
  ~ \qquad \qquad
  \subfloat[$\HRabi{0} + g^2$]{
    \includegraphics[height=4.5cm]{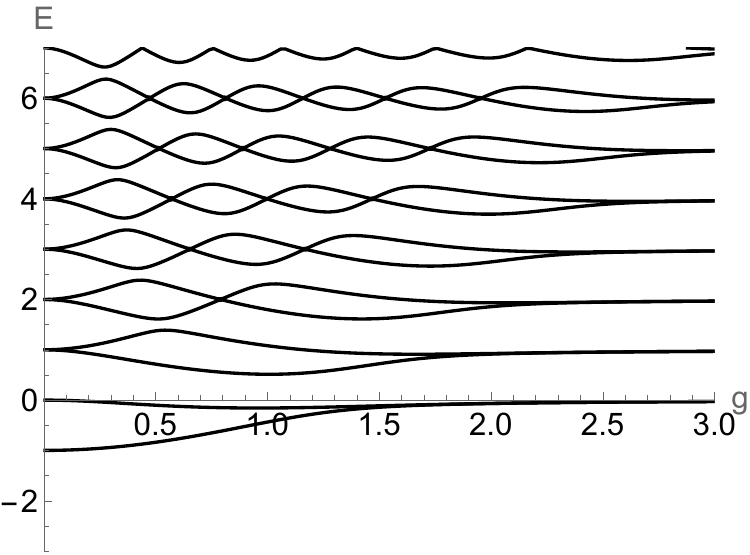}}
  \caption{Spectral curves for $\e = 0$ and $\Delta=1$}
  \label{fig:curve1}
\end{figure}

We note that the displacement $\tau=g^2+N$ is actually necessary for the convergence, as illustrated numerically in the spectral curves in Figure \ref{fig:curve1} for the case of the QRM. It also becomes apparent in the proof of the main result of this paper in Section \ref{sec:spzlinfty}. 
%This agrees with the result in \cite{Sch1985} that asserts the actual limit $g \to  \infty$ would require the additive renormalization $g^2 (=g^2/\omega)$.

%%%%%%%%%%%%%%%%%%%%%%%%%%%%%%%%%%%%%%%%%%%%
\section{The partition function of the AQRM}
\label{sec:partition}

Let us first recall the formula of the partition function of the AQRM (see \cite{R2020} and \cite{RW2019,RW2021} for the case of the QRM). The formulas for the heat kernel and the partition function are expressed by uniformly convergent series running over $\beta \cdot \Delta$, that is, the time weighted by the parameter $\Delta$.

The infinite series in the partition function may also be interpreted as a type of discrete path integral (see Figure 1 in \cite{RW2019}). Here, the discrete paths appear as representatives of the orbits of the action of the infinite symmetric group $\mathfrak{S}_\infty$ on $\mathbb{Z}_2^{\infty}$ (cf. Appendix C in \cite{RW2021}).

\begin{prop}
  The partition function \( \PARabi{\e}(\beta) (= \PARabi{\e}(\beta;g,\Delta))\) of the AQRM is given by the uniformly convergent series
  \begin{align*}
    \PARabi{\e}(\beta) = \frac{2 e^{g^2\beta}}{1-e^{- \beta}} \Bigg[ &\ch(\e \beta) \\
     &+ \sum_{\lambda=1}^{\infty} (\beta \Delta)^{2\lambda}\idotsint\limits_{0\leq \mu_1 \leq \cdots \leq \mu_{2 \lambda} \leq 1} \Theta_{2 \lambda}(g,\beta,\bm{\mu_{2 \lambda}}) \ch\left[\e \beta \left( 1-2 \sum_{\gamma=1}^{2\lambda} (-1)^{\gamma} \mu_{\gamma}\right) \right] d \bm{\mu_{2 \lambda}}\Bigg],
  \end{align*}
  where
  \[
    \Theta_{2 \lambda}(g,\beta,\bm{\mu_{2 \lambda}}) :=  \exp\left(-2g^2 \coth(\tfrac{\beta}2)+ 4g^2\frac{\ch(\beta(1-\mu_{2\lambda}))}{\sh(\beta)} +  \xi_{2 \lambda}(\bm{\mu_{2\lambda}},\beta) +\psi_{2\lambda}^{-}(\bm{\mu_{2 \lambda}},\beta)\right)
  \]
  with
\begin{align*} %\label{eq:auxfunc}
  \xi_\lambda(\bm{\mu_{\lambda}},t) &:=  -\frac{8g^2 }{\sh(t)} \sh\left(\tfrac12t(1-\mu_\lambda)\right)^2 (-1)^{\lambda}  \sum_{\gamma=0}^{\lambda} (-1)^{\gamma} \ch(t \mu_{\gamma})  \\
                    &\quad  - \frac{4 g^2 }{\sh(t)} \sum_{\substack{0\leq\alpha<\beta\leq \lambda-1\\ \beta - \alpha \equiv 1 \pmod{2}  }}  \left( \ch(t(1-\mu_{\beta+1}))-\ch(t(1-\mu_{\beta})) \right) ( \ch(t  \mu_{\alpha}) - \ch(t \mu_{\alpha+1})), \nonumber \\
  \psi_\lambda^{-}(\bm{\mu_{\lambda}},t) &:=  \frac{4 g^2 }{\sh(t)}\left[ \sum_{\gamma=0}^{\lambda} (-1)^{\gamma} \sh\left(t\left(\tfrac12 - \mu_{\gamma}\right)\right)\right]^2\\
\end{align*}
for \(\lambda \geq 1\) and \(\bm{\mu_{\lambda}} = (\mu_1,\mu_2,\cdots,\mu_\lambda) \) and where \( \mu_0 = 0 \).
\end{prop}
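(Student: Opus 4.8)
Since this formula is already established in \cite{R2020} (and in \cite{RW2019,RW2021} for the QRM), what follows is a sketch of how one derives it. The plan is to run a Dyson--Duhamel expansion of $e^{-\beta\HRabi{\e}}$ in powers of the diagonal perturbation $\Delta\sigma_z$ and then evaluate the resulting Fock-space traces by Gaussian (coherent-state) calculus. First I would write $\HRabi{\e} = \tilde H_0 + \Delta\sigma_z$ with $\tilde H_0 := a^\dagger a + (g(a+a^\dagger)+\e)\sigma_x$, keeping the flip parameter $\e$ inside the solvable block exactly as one does for the QRM; diagonalising $\sigma_x$ makes $\tilde H_0$ block diagonal with Fock-space blocks $h_\delta := (a+\delta g)^\dagger(a+\delta g) - g^2 + \delta\e = D(\delta g)^\dagger(a^\dagger a)D(\delta g) - g^2 + \delta\e$ for $\delta\in\{\pm1\}$, i.e.\ shifted harmonic oscillators up to the additive constants $-g^2+\delta\e$. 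The Duhamel formula then gives
\[
  e^{-\beta\HRabi{\e}} = \sum_{n\ge0}(-\Delta)^n\idotsint\limits_{0\le s_1\le\cdots\le s_n\le\beta} e^{-(\beta-s_n)\tilde H_0}\sigma_z\, e^{-(s_n-s_{n-1})\tilde H_0}\sigma_z\cdots\sigma_z\, e^{-s_1\tilde H_0}\,ds_1\cdots ds_n .
\]

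Next I would take the trace over $\C^2\otimes\mathcal F$. In the $\sigma_x$-eigenbasis $\sigma_z$ is the spin-flip, so each of the $n$ factors of $\sigma_z$ toggles the block index while the Fock operators in between preserve it; only the even terms $n=2\lambda$ return to the initial index and survive the trace, the block index along the $2\lambda+1$ time subintervals being forced to alternate $\delta_0,-\delta_0,\delta_0,\dots$ with $\delta_0\in\{\pm1\}$. Pulling the additive constants out of the exponentials, the $-g^2$ over total ``time'' $\beta$ yields $e^{g^2\beta}$; the $\pm\e$'s yield $\exp\big(-\delta_0\e(\beta-2\sum_{\gamma=1}^{2\lambda}(-1)^\gamma s_\gamma)\big)$, the argument being a telescoping combination of subinterval lengths; and after rescaling $s_\gamma=\beta\mu_\gamma$ --- whose Jacobian supplies $\beta^{2\lambda}$, hence $(\beta\Delta)^{2\lambda}$, along with the simplex $0\le\mu_1\le\cdots\le\mu_{2\lambda}\le1$ --- summing over $\delta_0$ produces the factor $2\ch\big[\e\beta(1-2\sum_{\gamma=1}^{2\lambda}(-1)^\gamma\mu_\gamma)\big]$ multiplying a $\delta_0$-independent Fock-space trace. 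That trace equals $\operatorname{Tr}_{\mathcal F}\big[\prod_{j=1}^{2\lambda+1}D(\delta_{j-1}g)^\dagger e^{-\tau_j a^\dagger a}D(\delta_{j-1}g)\big]$; collapsing the junction factors via $D(z)D(z')=e^{(z\overline{z'}-\overline{z}z')/2}D(z+z')$ (with trivial phase since $g\in\R$) reduces it to the standard form $\operatorname{Tr}_{\mathcal F}[D(w_1)e^{-\tau_1 a^\dagger a}\cdots D(w_m)e^{-\tau_m a^\dagger a}]$ with all $|w_i|=2g$, which evaluates --- by the resolution of the identity in coherent states and a Gaussian integration --- to $(1-e^{-\beta})^{-1}\exp(-\tfrac12\sum_{i,l}w_i M_{il}w_l)$ for an explicit matrix $M$ built from $\sh$ and $\ch$ of partial sums of the $\tau_j$. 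This is exactly $(1-e^{-\beta})^{-1}\Theta_{2\lambda}$.

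The main obstacle is this last bit of bookkeeping: matching the quadratic form $-\tfrac12\sum w_i M_{il}w_l$ with the explicit $-2g^2\coth(\tfrac\beta2)+4g^2\tfrac{\ch(\beta(1-\mu_{2\lambda}))}{\sh(\beta)}+\xi_{2\lambda}+\psi_{2\lambda}^{-}$ --- in particular the double sum over pairs $0\le\alpha<\beta\le\lambda-1$ with $\beta-\alpha$ odd --- which is purely mechanical but needs careful tracking of which shift is active on which subinterval and of all the signs $(-1)^\gamma$; the half-angle identity $\sh(\beta)=2\sh(\tfrac\beta2)\ch(\tfrac\beta2)$ is what turns the Mehler-kernel entries into the displayed $\coth(\tfrac\beta2)$. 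A secondary point is the uniform convergence: since the exponent of $\Theta_{2\lambda}$ is that of a genuine Gaussian heat-kernel trace, it has the shape ``bounded constant minus a positive semidefinite quadratic form'', so $\Theta_{2\lambda}$ (and the $\ch[\cdots]$ factor) is bounded by some $C(\beta)$ uniformly in $\lambda$ and in $\bm{\mu_{2\lambda}}$ for $\beta$ in a compact set; combined with the simplex volume $1/(2\lambda)!$ this produces the dominating series $\sum_\lambda C(\beta)|\beta\Delta|^{2\lambda}/(2\lambda)!$ and hence locally uniform convergence. Finally I would check the $\lambda=0$ term directly: with no flips the Fock-space trace is $\operatorname{Tr}_{\mathcal F}[D(\delta_0 g)^\dagger e^{-\beta a^\dagger a}D(\delta_0 g)]=\operatorname{Tr}_{\mathcal F}[e^{-\beta a^\dagger a}]=(1-e^{-\beta})^{-1}$ by cyclicity, and summing $e^{-\delta_0\e\beta}$ over $\delta_0$ recovers the stated $\ch(\e\beta)$ term (``$\Theta_0=1$'').
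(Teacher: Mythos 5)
The paper does not actually prove this proposition: it is recalled verbatim from \cite{R2020} (and from \cite{RW2019,RW2021} for the symmetric case), where the underlying derivation goes through the Trotter--Kato product formula and a combinatorial analysis of the action of $\mathfrak{S}_\infty$ on $\Z_2^\infty$, the simplex integrals arising as limits of Riemann sums over orbit representatives. Your route --- Duhamel expansion of $e^{-\beta \HRabi{\e}}$ in powers of $\Delta\sigma_z$ around the $\sigma_x$-block-diagonal part, with the Fock trace of the resulting product of displaced Gibbs factors evaluated by coherent-state Gaussian calculus --- is a legitimate alternative and in some ways more direct for the partition function (no discretization limit to control). The parts you do carry out are correct: the parity argument killing odd orders, the $e^{g^2\beta}$ prefactor from the $-g^2$ shift, the telescoping that yields $\ch\bigl[\e\beta\bigl(1-2\sum_{\gamma}(-1)^\gamma\mu_\gamma\bigr)\bigr]$ together with the overall factor $2$, the $(\beta\Delta)^{2\lambda}$ from rescaling to the unit simplex, the $\lambda=0$ term, and the domination of the series by $\sum_\lambda C(\beta)(\beta\Delta)^{2\lambda}/(2\lambda)!$.

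The genuine gap is the step you defer as ``purely mechanical'': identifying the Gaussian quadratic form produced by the coherent-state trace with the explicit exponent
$-2g^2\coth(\tfrac{\beta}{2})+4g^2\ch(\beta(1-\mu_{2\lambda}))/\sh(\beta)+\xi_{2\lambda}+\psi^{-}_{2\lambda}$. That identification is the entire content of the proposition --- everything specific about $\Theta_{2\lambda}$, in particular the restriction of the double sum in $\xi_{2\lambda}$ to pairs $\alpha<\beta$ of opposite parity and the perfect square structure of $\psi^{-}_{2\lambda}$, lives there --- and it is also the step most prone to sign errors, since the matrix $M$ depends on which displacement $\pm 2g$ is active on which subinterval. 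A complete proof would have to exhibit $M$ explicitly and reduce $-\tfrac12\sum_{i,l}w_iM_{il}w_l$ to the stated form; as written, the proposal is a sound plan whose decisive computation is asserted rather than performed. A useful sanity check you could add cheaply: the estimates in Section 2 of the paper show the stated exponent is $\le 0$ with equality exactly on the diagonal set $\mu_{2\gamma-1}=\mu_{2\gamma}$, which is consistent with the interpretation of $\Theta_{2\lambda}$ as a normalized Gaussian trace and would catch most sign mistakes in the matching.
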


We remark here that the contribution of the flip parameter $\e$ in the formula of  the partition function is limited to $\ch(\e \beta)$ and  $\ch\left[\e \beta \left( 1-2 \sum_{\gamma=1}^{2\lambda} (-1)^{\gamma} \mu_{\gamma}\right) \right]$ in the integrand. 

%First, we consider a simple example that is used later in the proof of the main results.

\begin{ex}
  \label{ex:part0}
  We note that the partition function formula above is valid even for the case $g=0$. In particular, in this case we have
  \begin{align*}
    \PARabi{\e}(\beta) = \frac{2}{1-e^{- \beta}} \Bigg[ &\ch(\e \beta) \\
     &+ \sum_{\lambda=1}^{\infty} (\beta \Delta)^{2\lambda}\idotsint\limits_{0\leq \mu_1 \leq \cdots \leq \mu_{2 \lambda} \leq 1} \ch\left[\e \beta \left( 1-2 \sum_{\gamma=1}^{2\lambda} (-1)^{\gamma} \mu_{\gamma}\right) \right] d \bm{\mu_{2 \lambda}}\Bigg].
  \end{align*}
  Therefore, the series
   \[
     \ch(\e \beta) + \sum_{\lambda=1}^{\infty} (\beta \Delta)^{2\lambda}\idotsint\limits_{0\leq \mu_1 \leq \cdots \leq \mu_{2 \lambda} \leq 1} \ch\left[\e \beta \left( 1-2 \sum_{\gamma=1}^{2\lambda} (-1)^{\gamma} \mu_{\gamma}\right) \right] d \bm{\mu_{2 \lambda}},
   \]
  converges for $\beta>0$ (see also Example \ref{ex:epsNZ}).  
\end{ex}

Let us now consider some estimates for the functions appearing in $\PARabi{\e}(\beta)$ that are used in the proof of the main result in Section \ref{sec:spzlinfty}. First, let us recall the elementary
identities
\begin{equation}
  \label{eq:ident1}
  \coth(\tfrac{t}2) = \frac{ \ch(t)+1}{\sh(t)},\qquad \sh^2(\tfrac{t}2) = \frac{\ch(t) - 1}2 .
\end{equation}
For the function $\psi_{2\lambda}^{-}(\bm{\mu_{\lambda}},t)$, since $\mu_{\gamma}\leq \mu_{\gamma+1}$ we see that
\begin{align*}
  \sum_{\gamma=0}^{2\lambda} (-1)^{\gamma} \sh\left(t\left(\tfrac12 - \mu_{\gamma}\right)\right) &= \sh(\tfrac{t}2) +
 \sum_{\gamma=1}^{\lambda} \left(  \sh\left(t\left(\tfrac12 - \mu_{2\gamma}\right)\right) -  \sh\left(t\left(\tfrac12 - \mu_{2\gamma-1}\right)\right) \right) \\
                                                       & \leq \sh(\tfrac{t}2).
\end{align*}
Therefore, using \eqref{eq:ident1} we obtain
\[
  \psi_\lambda^{-}(\bm{\mu_{\lambda}},t) \leq \frac{4 g^2 \sh^2(\tfrac{t}2) }{\sh(t)}  = \frac{2 g^2 \left( \ch(t)-1\right) }{\sh(t)} .
\]
Similarly, for $\xi_\lambda(\bm{\mu_{\lambda}},t)$, we have
\begin{align*}
    \sum_{\gamma=0}^{\lambda} (-1)^{\gamma} \ch(t \mu_{\gamma}) &= 1 +
    \sum_{\gamma=1}^{\lambda} \left( \ch\left(\tfrac{t}2 \mu_{2\gamma}\right) -  \ch\left(\tfrac{t}2 \mu_{2\gamma-1}\right) \right)\\
                                      &\geq 1,
\end{align*}
and
\begin{align*}
  \ch(t(1-\mu_{\gamma+1})) - \ch(t(1-\mu_{\gamma})) &\leq 0, \\
  \ch(t \mu_{\gamma}) - \ch(t \mu_{\gamma+1}) & \leq 0,
\end{align*}
for $\gamma\geq 0$. It follows that
\begin{align*}
  \xi_{2 \lambda}(\bm{\mu_{\lambda}},t) &\leq -\frac{8g^2 }{\sh(t)} \sh^2\left(\tfrac12t(1-\mu_{2\lambda})\right) =
                          -\frac{4g^2 }{\sh(t)} \left( \ch\left(t(1-\mu_{2\lambda})\right)-1 \right).
\end{align*}

\begin{rem}
There are similar series expressions for the heat kernel of the Kondo model in \cite{AYH1970} and Spin-Boson models in \cite{LCDFGZ1987}. Precisely, in the former the authors derive an explicit  matrix element (for the ground state) of the heat kernel.  The latter treats models having infinite degrees of freedom, containing the QRM as a special model (only one freedom). However, its heat kernel formula is an approximation derived by an stochastic integral, and there are infinite changes of sign in the series expansion of the heat kernel. Hence, in neither case an analytical formula for the partition function is explicitly given.
\end{rem}

%%%%%%%%%%%%%%%%%%%%%%%%%%%%
\section{The zeta limit of the spectrum}
\label{sec:limit}

Let $H$ be a self-adjoint operator acting on a Hilbert space $\mathcal{H}$ with spectrum consisting only of eigenvalues with uniformly bounded multiplicity. The corresponding Hurwitz-type spectral zeta function $\zeta_H(s,\tau)$ is defined by the Dirichlet series
\[
  \zeta_H(s,\tau) = \sum_{\lambda \in \Spec(H)} \frac{1}{(\lambda+\tau)^s}.
\]
Here, we assume that $\zeta_H(s,\tau)$ is absolutely convergent for $\Re(s)>\sigma \in \R$ possibly after imposing certain conditions on the parameter $\tau$.

\begin{ex}
  \label{Example:qho}
  For the quantum harmonic oscillator with Hamiltonian
  \[
    Q= a^{\dagger}a +\frac12
  \]
  the eigenvalues are given by $n+\frac12$ for $n\in \Z_{\geq0}$.
  The spectral zeta function  $\zeta_{Q}(s,\tau)$ is then given by the Hurwitz zeta function $\zeta(s, \tau+\frac12)$.
  In particular, we have
  \[
    \zeta_{Q}(s, 0) =(2^s-1)\zeta(s).
  \]
\end{ex}

Let us suppose now that the operator $H=H(a)$ depends on a parameter $a\in \C$. We are interested in computing limit
\begin{align}
  \label{eq:zetalimit}
  \lim_{a \to A}\zeta_H(s,\tau)
\end{align}
for a given $A \in \C \cup \{\infty\}$. 

\begin{ex}
  The Jaynes-Cummings model \cite{JC1963}, already mentioned in the introduction, provides a good approximation of
  the QRM in certain parameter regimes. A particular point is that this model has a $U(1)$-symmetry that allows the explicit computation of the spectrum in an elementary way (see e.g. \cite{HR2008}). It follows that 
  the spectral zeta function $\zeta_{\text{JC}}(s,\tau)$ of the Jaynes-Cummings model, after a parameter normalization, is
  given by
  \[
    \zeta_{\text{JC}}(s,\tau) =  \sum_{n=0}^{\infty} \frac{1}{(n+\frac12 \pm \sqrt{\Delta^2 + g^2 (n+1)} + \tau )^s},
  \]
  and is absolutely convergent in $\Re(s)>1$ for $\tau \in \R$ large enough. Here, the parameter $g$ and $\Delta$ have the same
  interpretation as in the case of the AQRM. We also note that it was recently proved in \cite{MM2023} that
  $\zeta_{\text{JC}}(s,\tau)$ can be meromorphically continued to the whole complex plane with  a unique simple pole at $s=1$. 

  For the Jaynes-Cummings model, we can actually compute certain limits of the spectral zeta function directly from
  the series expression in the region of absolute convergence. Concretely, we have
  \[
    \lim_{g \to 0} \zeta_{\text{JC}}(s,\tau) = \zeta(s,\tfrac12 + \Delta + \tau) + \zeta(s,\tfrac12 - \Delta + \tau)
  \]
  and
  \[
    \lim_{\Delta \to 0} \zeta_{\text{JC}}(s,\tau) = \sum_{n=0}^{\infty} \frac{1}{(n+\frac12 \pm g \sqrt{n+1} + \tau )^s}.
  \]
  We note that the limit $\Delta \to 0$ of the Jaynes-Cummings spectral zeta function is actually a more complicated function
  than that of the QRM (see Example \ref{ex:D0} below). One explanation is that since the Jaynes-Cummings model is
  the RWA (rotating wave approximation) of the QRM, it partially breaks the natural mathematical structure of the QRM.
\end{ex}

In general, it may not be possible to compute the limit from the defining series expression even in the region of absolute convergence. 
Recall that $\zeta_H(s,\tau)$ is the Mellin transform of the partition function $Z_{H}(t)$ of $H$, that is,
\begin{align}
  \label{eq:zetaMellin}
  \zeta_H(s,\tau) = \frac1{\Gamma(s)}\int_0^\infty t^{s-1}Z_{H}(t)e^{-t\tau}dt, \qquad\quad  \Re(s)>\sigma.
\end{align}
Therefore, if an explicit form of the partition function $Z_{H}(t)$ is known, it is possible to directly
compute the limit \eqref{eq:zetalimit}.

We call the resulting limit the {\em zeta limit of the spectrum} (or {\em spectral zeta limit}) and we assume the existence of an explicit formula for $Z_{H}(t)$ that allows its computation.
In particular, the spectral zeta limit \eqref{eq:zetalimit} may be computed as
\begin{align}
  \label{eq:defspeclimit}
    \lim_{a \to A} \zeta_H(s,\tau) & = \lim_{a \to A}  \frac1{\Gamma(s)}\int_0^\infty t^{s-1} Z_{H}(t) e^{-t\tau}dt \nonumber \\
   & = \frac1{\Gamma(s)}\int_0^\infty t^{s-1} \lim_{a \to A} Z_{H}(t) e^{-t\tau}dt, 
\end{align}
if the limit exists. Here, $Z_{H}(t)$ should satisfy certain properties to allow for the interchange of limit and series or integrals. % and standard theory (e.g. dominated convergence theorem).

%\begin{rem}\label{Kac problem}
   Due to a possible existence of iso-spectral operators, it is worth noting that the existence of zeta limit of
   the spectrum does not imply the convergence of the Hamiltonian in a stronger sense. This is another example of
   the famous Kac's drum problem \cite{Kac1968} and we leave a more detailed discussion for another occasion
   including a comparison of the notion of zeta limit with other modes of convergence for operators.
%\end{rem}

In this paper we consider only the case of the spectral zeta function of the AQRM. To simplify the discussion on
convergence we make the following assumption on $\tau$ {\em throughout the rest of discussion}.

\begin{assump}
  \label{assu:N}
  For the AQRM, the parameter $\tau$ is given by $\tau = g^2+ N$ with \[
    N = [\Delta+|\e|+1],
  \] where $[x]$ is the integer part of $x \in \R$.
\end{assump}

With this assumption, it is not difficult to verify that
\[
  \tau + \lambda >0
\]
for any eigenvalue of the AQRM, that the Mellin integral expression \eqref{eq:zetaMellin} holds when $\Re(s)>1$ and
that the change of limit and integration in \eqref{eq:defspeclimit} is valid (see e.g. \cite{Sugi2016} for the QRM case).

%%%%%%%%%%%%%%%%%%%%%%%%%%%%%
\subsection{Some basic spectral zeta limits for the AQRM}
\label{sec:bszlAQRM}

The spectral zeta limits of the AQRM corresponding to the vanishing of the parameters $g,\Delta$ can be verified in a direct way.
For these examples, the Hamiltonian exists in the limiting case and we observe that the  zeta limit of the spectrum corresponds
to the spectral zeta function of the limit of the Hamiltonian, as expected.

\begin{ex}
  \label{ex:D0}
  The spectral zeta limit for $\Delta \to 0$ is given by
  \begin{align*}
    \lim_{\Delta\to 0}  \ZARabi{\e}(s; \tau) &=  \frac2{\Gamma(s)}\int_0^\infty \frac{ \beta^{s-1} e^{\beta(g^2-\tau)}}{1-e^{- \beta}} \ch(\e \beta) d\beta\\
                                            &= \zeta(s,\tau+\e-g^2) + \zeta(s,\tau-\e -g^2) \\
                                            &= \zeta(s,N+\e) + \zeta(s,N-\e).
  \end{align*}
  Note that this corresponds with the case
  \[
    \lim_{\Delta \to 0}\HRabi{\e} = a^{\dagger}a + g (a + a^{\dagger}) \sigma_x + \e \sigma_x + \tau,
  \]
  which, after an elementary transformation, is the Hamiltonian of a displaced (matrix-valued) Harmonic oscillator.
\end{ex}

\begin{ex} \label{ex1}
  For $\e =0$, the spectral zeta limit $g \to 0$ is given by
  \begin{align*}
    \lim_{g\to 0} \ZARabi{0}(s;\tau) &=  \frac2{\Gamma(s)}\int_0^\infty d\beta \frac{ \beta^{s-1} e^{-\beta N} }{1-e^{- \beta}} \sum_{\lambda=0}^{\infty} (\beta \Delta)^{2\lambda} \idotsint\limits_{0\leq \mu_1 \leq \cdots \leq \mu_{2 \lambda} \leq 1} d \bm{\mu_{2 \lambda}} \\
                                            &= \frac2{\Gamma(s)}\int_0^\infty d\beta \frac{\beta^{s-1} e^{-\beta N} }{1-e^{- \beta}} \sum_{\lambda=0}^{\infty} \frac{(\beta \Delta)^{2\lambda}}{(2\lambda)!} = \frac2{\Gamma(s)}\int_0^\infty  \frac{\beta^{s-1} e^{-\beta N} \ch(\beta \Delta)}{1-e^{- \beta}}  d\beta  \\
                                            &= \zeta(s,N+\Delta) + \zeta(s,N-\Delta).
  \end{align*}
 In this case, the corresponding limit of the Hamiltonian is
  \[
    \lim_{g \to 0}\HRabi{0} = a^{\dagger}a +  \Delta \sigma_z + N,
  \]
  which realizes a pair of decoupled harmonic oscillators.
  \end{ex}

  \begin{ex}
    \label{ex:epsNZ}
    In general, when $g=0$, the Hamiltonian is given by
    \[
      \HRabi{\e} = a^{\dagger}a +  \Delta \sigma_z + \e \sigma_x.
    \]
    Hence, the diagonalization of the matrix shows that the spectrum is given by $\Z_{\geq0} \pm \sqrt{\Delta^2+\e^2}$.
    From this, it follows that
  \[
    \lim_{g\to 0} \ZARabi{\e}(s; \tau) =  \zeta(s,N+\sqrt{\Delta^2+\e^2}) + \zeta(s,N-\sqrt{\Delta^2+\e^2}).
  \]
  We note that this formula first appeared in \cite{Sugi2016} for the case $\e =0$. A direct verification
  from the limit is highly non-trivial, nevertheless it shows a finer structure of  $\lim_{g\to 0} \ZARabi{\e}(s; \tau)$ from  the viewpoint of zeta
  functions (see Section \ref{sec:epsNZ}).
  \end{ex}

%%%%%%%%%%%%%%%%%%%%%%%%%
\subsection{The spectral zeta limit for $g \to \infty$}
\label{sec:spzlinfty}

In this section we give the main result of the paper, that is,  the spectral zeta limit of the AQRM as $g \to \infty$.
In this case, in contrast with the discussion in Section \ref{sec:bszlAQRM}, the limit of the Hamiltonian cannot be defined in a naive way (see, e.g. \cite{Sch1985}). Nevertheless, the spectral zeta limit exists and agrees with the results of Hiroshima and Shirai for the weak limit $g \to \infty$ using resolvent analysis \cite{H2023}, and physics literature \cite{Sch1985}.
  
\begin{thm}
  \label{thm:ginf}
  Let $\tau=g^2+ N$ be given as in Assumption \ref{assu:N}. The spectral zeta limit as $g \to \infty$ of $\ZARabi{\e}(s; \tau)$ is
  \[
    \lim_{g\to \infty} \ZARabi{\e}(s; \tau) = \zeta(s,N+\e) + \zeta(s,N-\e).
  \]
In particular, if $\e =0$ and  $0<\Delta<1$, we have
  \[
    \lim_{g\to \infty} \ZARabi{0}(s; g^2+1) = 2 \zeta(s).
  \]

\end{thm}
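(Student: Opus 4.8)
The plan is to compute the limit directly from the Mellin representation \eqref{eq:zetaMellin}, which for $H=\HRabi{\e}$ reads
\[
  \ZARabi{\e}(s;\tau)=\frac1{\Gamma(s)}\int_0^\infty \beta^{s-1}\,\PARabi{\e}(\beta)\,e^{-\beta\tau}\,d\beta,\qquad \Re(s)>1,
\]
and to push $g\to\infty$ through the integral. With $\tau=g^2+N$ the factor $e^{g^2\beta}$ in the partition function formula of Section~\ref{sec:partition} cancels against $e^{-\beta\tau}$, so that
\[
  \PARabi{\e}(\beta)e^{-\beta\tau}=\frac{2e^{-\beta N}}{1-e^{-\beta}}\Bigl[\ch(\e\beta)+\sum_{\lambda=1}^{\infty}(\beta\Delta)^{2\lambda}\!\!\idotsint\limits_{0\le\mu_1\le\cdots\le\mu_{2\lambda}\le1}\!\!\Theta_{2\lambda}(g,\beta,\bm{\mu_{2\lambda}})\,\ch\!\Bigl[\e\beta\Bigl(1-2\!\sum_{\gamma=1}^{2\lambda}(-1)^\gamma\mu_\gamma\Bigr)\Bigr]\bm{\mu_{2\lambda}}\Bigr];
\]
the whole $g$-dependence now sits in the factors $\Theta_{2\lambda}$, and it remains to pass the limit through the integral and the sum.

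The first step is a uniform bound on $\Theta_{2\lambda}$. Setting $u=\ch(\beta(1-\mu_{2\lambda}))$ and using \eqref{eq:ident1}, one has $-2g^2\coth(\tfrac{\beta}2)+\tfrac{4g^2u}{\sh\beta}=\tfrac{2g^2}{\sh\beta}(-\ch\beta-1+2u)$, while the estimates assembled in Section~\ref{sec:partition} give $\xi_{2\lambda}\le-\tfrac{4g^2}{\sh\beta}(u-1)$ and $\psi_{2\lambda}^{-}\le\tfrac{2g^2}{\sh\beta}(\ch\beta-1)$; the three right-hand sides sum to $0$. Hence the exponent of $\Theta_{2\lambda}$ equals $g^2$ times a non-positive quantity independent of $g$, so $0<\Theta_{2\lambda}(g,\beta,\bm{\mu_{2\lambda}})\le1$ for every $g$. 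Replacing $\Theta_{2\lambda}$ by $1$ and invoking Example~\ref{ex:part0} gives the $g$-uniform majorant $\PARabi{\e}(\beta)e^{-\beta\tau}\le e^{-\beta N}\,\PARabi{\e}(\beta)\big|_{g=0}$; since the $g=0$ Hamiltonian has spectrum $\Z_{\ge0}\pm\sqrt{\Delta^2+\e^2}$ (Example~\ref{ex:epsNZ}) and $N>\Delta+|\e|\ge\sqrt{\Delta^2+\e^2}$ by Assumption~\ref{assu:N}, the function $\beta^{s-1}e^{-\beta N}\PARabi{\e}(\beta)|_{g=0}$ behaves like $2\beta^{s-2}$ as $\beta\to0^+$ and decays exponentially as $\beta\to\infty$, so it lies in $L^1(0,\infty)$ for $\Re(s)>1$.

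The second step is the pointwise decay $\Theta_{2\lambda}\to0$ as $g\to\infty$ for a.e.\ $\bm{\mu_{2\lambda}}$ in the simplex. The exponent of $\Theta_{2\lambda}$ is $g^2$ times a non-positive $g$-independent function $F(\beta,\bm{\mu_{2\lambda}})$, so I only need $\{F=0\}$ to be Lebesgue-null. Tracing the equality cases above, $F=0$ forces equality in $\xi_{2\lambda}\le-\tfrac{4g^2}{\sh\beta}(u-1)$, which — off the null set $\{\mu_{2\lambda}=1\}$ — requires $\sum_{\gamma=0}^{2\lambda}(-1)^\gamma\ch(\beta\mu_\gamma)=1$; but this sum equals $1+\sum_{j=1}^{\lambda}\bigl(\ch(\beta\mu_{2j})-\ch(\beta\mu_{2j-1})\bigr)$ with non-negative summands, so it forces $\mu_1=\mu_2,\ \mu_3=\mu_4,\ \dots,\ \mu_{2\lambda-1}=\mu_{2\lambda}$, a measure-zero condition. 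Thus $\Theta_{2\lambda}\to0$ a.e.; since moreover $\ch[\e\beta(1-2\sum(-1)^\gamma\mu_\gamma)]\le\ch(\e\beta)$ on the simplex and $\sum_\lambda(\beta\Delta)^{2\lambda}\idotsint\ch[\cdots]\,\bm{\mu_{2\lambda}}<\infty$ by Example~\ref{ex:part0}, dominated convergence (on the product of counting measure in $\lambda$ and Lebesgue measure) makes the inner sum tend to $0$. Hence $\PARabi{\e}(\beta)e^{-\beta\tau}\to\frac{2e^{-\beta N}\ch(\e\beta)}{1-e^{-\beta}}$ for each $\beta>0$, and a second dominated convergence (with the majorant from the previous paragraph) gives
\[
  \lim_{g\to\infty}\ZARabi{\e}(s;\tau)=\frac1{\Gamma(s)}\int_0^\infty\beta^{s-1}\,\frac{2e^{-\beta N}\ch(\e\beta)}{1-e^{-\beta}}\,d\beta.
\]

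Finally I would evaluate the integral: expanding $\tfrac1{1-e^{-\beta}}=\sum_{k\ge0}e^{-k\beta}$ and $2\ch(\e\beta)=e^{\e\beta}+e^{-\e\beta}$, interchanging sum and integral (all terms positive), and using $\tfrac1{\Gamma(s)}\int_0^\infty\beta^{s-1}e^{-a\beta}\,d\beta=a^{-s}$ for $a>0$ (valid as $k+N\pm\e>0$) gives $\sum_{k\ge0}\bigl((k+N+\e)^{-s}+(k+N-\e)^{-s}\bigr)=\zeta(s,N+\e)+\zeta(s,N-\e)$, the first identity. For the last assertion, $\e=0$ with $0<\Delta<1$ gives $N=[\Delta+1]=1$ in Assumption~\ref{assu:N}, and since $\zeta(s,1)=\zeta(s)$ the limit is $2\zeta(s)$. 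The main obstacle is the justification of the two interchanges of limit and integration, and this comes down to the two facts singled out above: the uniform bound $0<\Theta_{2\lambda}\le1$ — a telescoping consequence of the Section~\ref{sec:partition} estimates — and the Lebesgue-nullity of $\{F=0\}$; with these in hand the rest is the routine Mellin computation.
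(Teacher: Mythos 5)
Your proof is correct and follows essentially the same route as the paper's: the telescoping bound showing the exponent of $\Theta_{2\lambda}$ is $g^2$ times a non-positive quantity, the observation that equality holds only on the measure-zero set $\mu_{2\gamma-1}=\mu_{2\gamma}$, and dominated convergence reducing the limit to the $\lambda=0$ term whose Mellin transform gives $\zeta(s,N+\e)+\zeta(s,N-\e)$. You supply somewhat more detail than the paper on the integrable majorant (via the $g=0$ partition function and $N>\sqrt{\Delta^2+\e^2}$) and on the final evaluation, but the argument is the same.
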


\begin{proof}

  By the estimates of Section \ref{sec:partition}, the argument of the exponential in $\Theta_{2 \lambda}(g,\beta,\bm{\mu_{2 \lambda}})$ is 
  \begin{align}
    \label{eq:ineq1}
    & -2g^2 \coth(\tfrac{\beta}2) +  4g^2\frac{\ch(\beta(1-\mu_{2\lambda}))}{\sh(\beta)} +  \xi_{2 \lambda}(\bm{\mu_{2\lambda}},\beta) +\psi_{2\lambda}^{-}(\bm{\mu_{2 \lambda}},\beta), \nonumber \\
    &\,\leq  \frac{g^2}{\sh(\beta)}\left( -2\ch(\beta) -2  +  4 \ch(\beta(1-\mu_{2\lambda}))  - 4 \ch\left( \beta(1-\mu_{2\lambda}) \right) + 4
      + 2\ch(\beta) -2 \right) \nonumber \\
    &\,= 0.
  \end{align}

  Using this estimate and taking into account Example \ref{ex:part0}, by the dominated convergence we only need
  to consider the limits
  \[
    \lim_{g\to \infty}  \idotsint\limits_{0\leq \mu_1 \leq \cdots \leq \mu_{2 \lambda} \leq 1} \Theta_{2 \lambda}(g,\beta,\bm{\mu_{2 \lambda}}) d \bm{\mu_{2\lambda}},
  \]
  for $\lambda>0$.
  Note that the only possibility for the equality in \eqref{eq:ineq1} is that $\mu_{2\gamma-1}=\mu_{2\gamma}$ for $\gamma=1,2,\cdots,\lambda$.
  We may thus write the integral in the limit as
  \[
    \idotsint\limits_{0\leq \mu_1 \leq \cdots \leq \mu_{2 \lambda} \leq 1} \exp\left(\frac{g^2}{\sh(\beta)} f_{2\lambda}(g,\bm{\mu_{2\lambda}})\right) \ch\left[\e \left( \beta+ \eta_{2\lambda}(\bm{\mu_{2\lambda}},\beta)\right)\right] d \bm{\mu_{2\lambda}},
    \]
    where for any $g>0$,  $f_\lambda(g,\bm{\mu_{2\lambda}})$ is negative a.e. in the domain, that is, except for the zero-measure set
    \[
      \left\{ (x_1,\cdots,x_{2\lambda}) \subset [0,1]^{2\lambda} \,   \big| \, x_{2\gamma-1} = x_{2\gamma}\, , \, \gamma =1,2,\cdots,\lambda \right\}.
    \]
    Therefore
    \[
      \lim_{g\to \infty} \idotsint\limits_{0\leq \mu_1 \leq \cdots \leq \mu_{2 \lambda} \leq 1} \exp\left(\frac{g^2}{\sh(\beta)} f(g,\bm{\mu_{2\lambda}}) \right) \ch\left[\e \left( \beta+ \eta_{2\lambda}(\bm{\mu_{2\lambda}},\beta)\right)\right] d \bm{\mu_{2\lambda}} = 0.
    \]
   Thus, the limit of the partition function is 
    \[
      \lim_{g\to \infty}\PARabi{\e}(\beta;g,\Delta)e^{-\beta (g^2+N)} = \frac{e^{-(N-\e)\beta}+e^{-(N+\e)\beta}}{1-e^{- \beta}},
    \]
    whence the result follows immediately from the definition \eqref{eq:speczetamellin} and the Mellin transform expression of the Hurwitz zeta function.
  \end{proof}

  \begin{rem}
    As mentioned in the introduction, the choice $\tau=g^2+N$ in Assumption \ref{assu:N} and, in particular, the parameter $N$ depend on the scale of values of $\Delta$ and $|\e|$, and are indispensable in the proof of Theorem \ref{thm:ginf}.
    This choice also meets the known result in \cite{Sch1985} that shows that the actual limit $g \to  \infty$ requires the
    additive renormalization $g^2 (=g^2/\omega)$. The need for the additive normalizing  may also be verified directly
    without numerical computation by computing the scalar product of the eigenstates of the two shifted oscillators
    with the same energy and noting that it goes to zero for $g\to \infty$ for the QRM (the case where $\e=0$). On the other
    hand, for the expression of the heat semigroup of the QRM and the Spin-Boson model described by the Feynman–Kac
    integrals \cite{HH2012, HHL2014}, it is not straightforward to see the effect of the additive renormalization
    above, i.e. the shift $g^2$ cannot be found there. 

%This was communicated by \cite{B2023}. 
%We remark that this result has been observed among theoretical physicists.
\end{rem}

\;

%%%%%%%%%%%%%%%%%%%%%%%%%%%%%%%%%
\subsection{A remark on the limit of the heat kernel}
\label{sec:limHeat}

In view of the spectral zeta limit of Section \ref{sec:spzlinfty}, it is natural to consider the effect of the limit $g \to \infty$ in the heat kernel
$\KBRabi{\e}$ of $\HRabi{\e}+g^2$. From the definition, it is easy to see that we obtain
$\KBRabi{\e}$ by multiplying  $e^{-\tau t} = e^{-(g^2+N)t}$ to the heat kernel $\KRabi{\e}$.
Concretely, we have
\begin{align*}    
  \KBRabi{\e}(x,y,t) =  e^{-N t} K(x,y,t)  \sum_{\lambda=0}^{\infty} (t\Delta)^{\lambda} \Phi_{\lambda}(x,y,t),
\end{align*}
where $K(x,y,t)$ is the Mehler kernel (for the harmonic oscillator), and $\Phi_\lambda(x,y,t) = \Phi_\lambda(x,y,t;g,\e)$ is a $2 \times 2$-matrix valued function
defined with an iterated integral (see \cite{R2020} for the full expression). 

%We remark here that the partition function $\PARabi{\e}(\beta;g,\Delta)$ is given by 
%\[
% \PARabi{\e}(\beta;g,\Delta)= \int_{\R} \tr \KRabi{\e} (x,x,t) dx.
%\]

Using the estimates of Section \ref{sec:partition} (see also Section \ref{sec:partity limit} below), it is not difficult
to verify that
\[
  \lim_{g \to \infty} \Phi_{\lambda}(x,y,t) = \mathbf{0},
\]
the zero matrix, for $\lambda>0$. In contrast to the partition function, the term corresponding to $\lambda =0$ in the series is given by
\[
  \frac{\exp\left( - \frac{1+e^{-2t}}{2(1-e^{-2t})} (x^2 + y^2) +  \frac{2 e^{-t} x y}{1-e^{-2t}} -2g^2 \tanh(\tfrac{t}2) \right)}{\sqrt{\pi (1-e^{-2t})}}
  \begin{bmatrix}
    \cosh  &  - \sinh  \\
    -\sinh &  \cosh
  \end{bmatrix}
  \left( \sqrt2 g(x+y)\frac{1-e^{-t}}{1+e^{-t}} + \e t \right),
\]
multiplied by $e^{-N t}$ and it follows the limit for $g \to \infty$ is the zero matrix.  Consequently, we verify that
\[
  \lim_{g \to \infty} \KBRabi{\e}(x,y,t) = \mathbf{0},
\]
where the changes of limit and integral and series are allowed by the uniform convergence of the series defining the heat kernel.

The vanishing of the heat kernel may appear surprising in light of the spectral zeta limit of Theorem \ref{thm:ginf}.
However,  it is consistent since the heat kernel must satisfy the heat equation
\[
  \frac{\partial}{\partial t} \KBRabi{\e}(x,y,t)= - \left(\HRabi{\e} + g^2+N \right) \KBRabi{\e}(x,y,t),
\]
but the limit of the Hamiltonian $\HRabi{\e}+g^2$ is not defined at the limit $g \to \infty$ (only the weak-limit). Therefore, the limit of the heat kernel vanishes and the heat equation hold trivially.

%%%%%%%%%%%%%%%%%%%%%%%%%%%%%%%%%%%%%%%%%%%%
\section{Zeta limit of the parity decomposition for the QRM}
\label{sec:partity limit}

In this section we consider spectral zeta limits for the parity decomposition of the QRM (that is, $\e=0$).
Surprisingly, but naturally in the sense of the $\Z_2$-symmetry of the QRM, we show that in addition to Hurwitz zeta
functions, $L$-functions corresponding to the Dirichlet character modulo 2 appear in the limits.

We first recall some of the basic facts of the symmetry for the case of the QRM. The Hamiltonian
$ \HRabi{0} $ of the QRM possesses a \(\Z_2\)-symmetry observed by the existence of a parity operator
\[
  \Pi:= - \sigma_{z} e^{-i \pi a^{\dag} a}
\]
satisfying \([\Pi,\HRabi{0} ]=0\) and having eigenvalues \(\{\pm 1\}\). Consequently, the direct decomposition of the full
space $L^2(\R)\otimes \C^2$ into invariant subspaces (corresponding to the positive and negative parity)  is given by
\[
  L^2(\R)\otimes \C^2= \mathcal{H}_+\oplus \mathcal{H}_-, \qquad \quad \mathcal{H}_\pm \simeq L^2(\R).
\]
The Hamiltonians $H_{\pm}$ acting on each parity subspaces are given by
\[
  H_{\pm} = a^{\dag} a + g ( a + a^{\dag}) \pm \Delta \hat{T},
\]
where $(\hat{T}\psi)(z):= \psi(-z)$, for $\psi \in L^2(\R)$, is the reflection operator acting on \( L^2(\R)\).
According to this parity decomposition,  the heat kernel of the QRM is decomposed (Theorem 4.4 \cite{RW2019}) as 
 \[
   \KBRabi{0}(x,y,t) =  \KBRabip^{+} (x,y,t) \oplus \KBRabip^{-} (x,y,t). 
\]
The partition function \(\PRabi^{\pm}(\beta;g,\Delta) \) of the Hamiltonian \(H_{\pm} \) corresponding to each parity is
computed directly from the corresponding heat kernels.

\begin{prop}{(Corollary 4.1 of \cite{RW2019})}
  \label{cor:parityPart}
  The partition function  $\PRabi^{\pm}(\beta)$ for the parity Hamiltonian \(H_{\pm}\) is given by
  \begin{align*}
    &Z^{\pm}(\beta)=  \frac{ e^{g^2\beta}}{1-e^{- \beta}} \Bigg[ 1 +  \sum_{\lambda =1}^{\infty} (\beta\Delta)^{2 \lambda} \idotsint\limits_{0\leq \mu_1 \leq \cdots \leq \mu_{2 \lambda} \leq 1} \Theta_{2 \lambda}(g,\beta,\bm{\mu_{2 \lambda}})  d \bm{\mu_{2\lambda}}   \Bigg] \\
    &\quad \quad  \mp \frac{ e^{g^2 \beta}}{1+e^{- \beta}} \Bigg[ \sum_{\lambda = 0}^{\infty} (\beta \Delta)^{2\lambda+1} \idotsint\limits_{0\leq \mu_1 \leq \cdots \leq \mu_{2 \lambda+1} \leq 1} \Xi_{2\lambda+1} (g,\beta,\bm{\mu_{2 \lambda+1}}) d \bm{\mu_{2\lambda+1}} \Bigg]\\
  \end{align*}
  with
  \[
    \Xi_{2\lambda+1} (g,\beta,\bm{\mu_{2 \lambda+1}}) = \exp\left(- 2g^2 \tanh(\tfrac{\beta}2) + \xi_{2\lambda+1}(\bm{\mu_{2 \lambda+1}},\beta) +\psi^+_{2 \lambda+1} (\bm{\mu_{2\lambda+1}},\beta) \right),
  \]
  and    
  \begin{equation*}
    \psi_\lambda^{+}(\bm{\mu_{\lambda}},t) :=  \frac{4 g^2 }{\sh(t)}\left[ \sum_{\gamma=0}^{\lambda} (-1)^{\gamma} \ch \left(t\left(\tfrac12 - \mu_{\gamma}\right) \right) \right]^2.
  \end{equation*}\
\end{prop}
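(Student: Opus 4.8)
\emph{Proof proposal.} The operators $H_{\pm} = H_0 \pm \Delta\hat{T}$ act on $L^2(\R)$, where $H_0 = a^{\dagger}a + g(a+a^{\dagger})$ is a displaced harmonic oscillator and $\hat{T}$ is the unitary involution $(\hat{T}\psi)(z) = \psi(-z)$. Since $H_0$ is bounded below and $e^{-\beta H_0}$ is trace class for $\beta>0$, the plan is to expand the heat semigroup $e^{-\beta H_{\pm}}$ by the Duhamel (Dyson) series in powers of the bounded perturbation $\pm\Delta\hat{T}$, take the trace, and evaluate the resulting Gaussian integrals. After rescaling the time variables $s_j = \beta\mu_j$ one obtains
\[
  e^{-\beta H_{\pm}} = \sum_{n=0}^{\infty} (\mp\beta\Delta)^{n}\idotsint\limits_{0\le\mu_1\le\cdots\le\mu_n\le 1} e^{-\beta(1-\mu_n)H_0}\,\hat{T}\,e^{-\beta(\mu_n-\mu_{n-1})H_0}\,\hat{T}\cdots\hat{T}\,e^{-\beta\mu_1 H_0}\, d\bm{\mu_n}.
\]
The algebraic engine is the identity $e^{-sH_0}\,\hat{T} = \hat{T}\,e^{-s\widetilde{H}_0}$, where $\widetilde{H}_0 = \hat{T}H_0\hat{T} = a^{\dagger}a - g(a+a^{\dagger})$ is $H_0$ with the displacement reversed; pushing all $n$ reflections to one end turns the integrand into $\hat{T}^{\,n}$ times a product of $n+1$ displaced-oscillator semigroups whose displacements alternate between $+g$ and $-g$.

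Next, take the trace. Using $\hat{T}|x\rangle = |-x\rangle$ one has $\tr[A\,\hat{T}^{\,n}] = \int_\R A(x,x)\,dx$ when $n$ is even and $\tr[A\,\hat{T}^{\,n}] = \int_\R A(x,-x)\,dx$ when $n$ is odd, so term by term $\PRabi^{\pm}(\beta) = \tr[e^{-\beta H_{\pm}}]$ becomes a finite-dimensional Gaussian integral over the $n$ intermediate positions together with the diagonal variable $x$ — with the twist $x\mapsto -x$ for odd $n$ — since each displaced-oscillator semigroup has a displaced Mehler kernel. For even orders $n=2\lambda$ the quadratic form in the intermediate variables is the tridiagonal form governing the unperturbed trace, whose $x$-integration contributes $(1-e^{-\beta})^{-1}$ and the constant $e^{g^2\beta}$ (coming from $H_0 = (a+g)^{\dagger}(a+g) - g^2$); for odd orders $n=2\lambda+1$ the antiperiodic twist $x\mapsto -x$ replaces this by $(1+e^{-\beta})^{-1}$, so that $\coth(\tfrac\beta2)$ is traded for $\tanh(\tfrac\beta2)$. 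Completing the square in the linear terms produced by the alternating $\pm g$ displacements yields, after simplification, exactly the exponential factors and the functions $\xi_\lambda,\psi_\lambda^{-}$ (even case) and $\xi_\lambda,\psi_\lambda^{+}$ (odd case) recorded in the statement; collecting the even terms assembles the first bracket with integrand $\Theta_{2\lambda}$ and the odd terms the second bracket with integrand $\Xi_{2\lambda+1}$ and overall sign $\mp$. Uniform convergence on $\beta$-compacta follows from the estimates of Section~\ref{sec:partition} (and their analogues for $\psi^{+}$): these force $\Theta_{2\lambda}$ and $\Xi_{2\lambda+1}$ to be bounded on the simplex, so the series is dominated by $\sum_n (\beta\Delta)^n/n!$. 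Alternatively — and this is how the result is obtained in \cite{RW2019} — one simply integrates the explicit parity heat-kernel decomposition $\KBRabi{0} = \KBRabip^{+}\oplus\KBRabip^{-}$ of Theorem~4.4 of \cite{RW2019} along the diagonal, $\PRabi^{\pm}(\beta) = \int_\R \tr\KBRabip^{\pm}(x,x,\beta)\,dx$, leaving only the single $x$-integral to perform.

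The main obstacle is the explicit evaluation of the multivariate Gaussian integral in the first route: one must identify the inverse and determinant of the tridiagonal quadratic form — this is what produces the $(1\mp e^{-\beta})^{-1}$ denominators — and then, more delicately, organise the many cross terms generated by the alternating-sign displacements into the closed forms $\xi_\lambda$ and $\psi_\lambda^{\pm}$ appearing in the statement. A secondary subtlety is the correct bookkeeping of the $x\mapsto -x$ twist in the odd-order terms: it is exactly what distinguishes the ``periodic'' denominator $1-e^{-\beta}$ from the ``antiperiodic'' $1+e^{-\beta}$ and attaches the sign $\mp$ to the second bracket, which is why routing through the heat-kernel decomposition of \cite{RW2019}, where this computation has already been carried out, is the economical option.
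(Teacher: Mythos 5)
Your proposal is correct in outline, and your ``alternative'' route is in fact exactly what the paper does: Proposition~\ref{cor:parityPart} is quoted verbatim as Corollary~4.1 of \cite{RW2019}, and the paper's only justification is the remark that $\PRabi^{\pm}(\beta)$ is obtained by integrating the parity heat kernels $\KBRabip^{\pm}(x,x,\beta)$ along the diagonal. Your primary route --- Dyson expansion of $e^{-\beta(H_0\pm\Delta\hat T)}$ in the bounded perturbation $\pm\Delta\hat T$, the intertwining $e^{-sH_0}\hat T=\hat T e^{-s\widetilde H_0}$, and the twisted trace $\tr[A\hat T]=\int A(x,-x)\,dx$ for odd orders --- is a faithful sketch of how \cite{RW2019} actually derives the heat kernel itself, and it correctly accounts for the structural features of the statement: the sign $(\mp)^{n}$ splitting into the two brackets, the periodic denominator $(1-e^{-\beta})^{-1}$ versus the antiperiodic $(1+e^{-\beta})^{-1}$, and the trade of $\coth(\tfrac{\beta}{2})$ for $\tanh(\tfrac{\beta}{2})$ between $\Theta_{2\lambda}$ and $\Xi_{2\lambda+1}$. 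The one substantive caveat is that the step you flag as ``the main obstacle'' --- inverting the tridiagonal Gaussian form and organising the cross terms from the alternating $\pm g$ displacements into the specific closed expressions $\xi_\lambda$ and $\psi_\lambda^{\pm}$ --- is precisely where all of the content of Corollary~4.1 of \cite{RW2019} resides, and you assert rather than perform it (note also that the clean factorisation of $e^{g^2\beta}$ out of every order, not just $\lambda=0$, is itself an output of that computation rather than an immediate consequence of $H_0=(a+g)^{\dagger}(a+g)-g^2$). Since you explicitly defer that evaluation to \cite{RW2019}, which is what the paper does anyway, the proposal is acceptable as it stands; as a self-contained proof it would require carrying out the Gaussian integration in full.
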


The spectral zeta functions $\ZRabi^{\pm}(s;\tau)$ of the parity Hamiltonians $H^{\pm}$ are defined in a straightforward way.
Obviously, we have
\[
\ZARabi{0}(s; \tau) =  \ZRabi^{+}(s; \tau)+  \ZRabi^{-}(s; \tau).
\]

As in the case of the spectral zeta limit of (A)QRM, we take $\tau = g^2+N$ satisfying the conditions of Assumption \ref{assu:N} to
ensure asymptotic and convergence properties of the partition functions and the spectral zeta limit.

Since $\ZRabi^{\pm}(s; \tau)$ is expressed by the Mellin transform of $ Z^{\pm}(\beta; g, \Delta)$ respectively, 
by the same reasoning used in Sections \ref{sec:bszlAQRM}, we have the following spectral result.
As customary, for a Dirichlet character $\chi$, the $L$-function $L_\chi(s,\tau)$ is defined by
\[
  L_\chi(s,\tau) = \sum_{n=0}^\infty \frac{\chi(n)}{(n+\tau)^s}.
\]

\begin{thm}
  \label{thm:partialzetalimit}
  Let $\tau=g^2+N$. Let $\chi$ be the Dirichlet character defined by $\chi(n)=(-1)^n$,
  then we have the following spectral zeta limit for $g \to 0$ and $\infty$.
  
  \begin{enumerate}
  \item When $g\to 0$, we have 
   \[
    \lim_{g\to 0}\ZRabi^{\pm}(s; \tau) = \frac12\left\{\zeta(s, N+\Delta) + \zeta(s,N-\Delta) \mp \left(L_\chi(s,N+\Delta)-L_\chi(s,N-\Delta)\right)\right\}  
 \]
  \item When $g \to \infty$, we have
   \[
    \lim_{g\to \infty}\ZRabi^{\pm}(s; \tau)= \zeta(s,N).
  \]
  \end{enumerate}
\end{thm}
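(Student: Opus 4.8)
The plan is to run the same two-step strategy used for Theorem~\ref{thm:ginf} and the examples in Section~\ref{sec:bszlAQRM}, but applied to the two parity partition functions $Z^{\pm}(\beta;g,\Delta)$ given by Proposition~\ref{cor:parityPart}, and to extract the $L$-function contribution from the \emph{second} bracket (the one with a $1+e^{-\beta}$ denominator and odd powers $(\beta\Delta)^{2\lambda+1}$). The key observation is that
\[
  \frac{1}{1+e^{-\beta}} = \sum_{n=0}^{\infty}(-1)^n e^{-n\beta} = \sum_{n=0}^{\infty}\chi(n)e^{-n\beta},
\]
so that the Mellin transform of $e^{g^2\beta}e^{-\tau\beta}/(1+e^{-\beta})$ against $\beta^{s-1}$ produces $\Gamma(s)L_\chi(s,N)$ (using $\tau=g^2+N$), exactly as the first bracket's $1/(1-e^{-\beta})$ produces $\Gamma(s)\zeta(s,N)$. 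So once the limits of the integrands are identified, assembling the answer is just term-by-term Mellin transforms of geometric-type series, valid for $\Re(s)>1$ by Assumption~\ref{assu:N} and the dominated-convergence justification already invoked in the excerpt.

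\textbf{Step 1 ($g\to\infty$).} I would first dispatch part (2). In the first bracket, the integrand is $\Theta_{2\lambda}$, which was shown in \eqref{eq:ineq1} to have nonpositive exponent, equal to zero only on the measure-zero set $\{\mu_{2\gamma-1}=\mu_{2\gamma}\}$; hence by dominated convergence all $\lambda\ge1$ terms vanish as $g\to\infty$, leaving only the constant $1$. For the second bracket I need the analogous estimate for $\Xi_{2\lambda+1}$: its exponent is $-2g^2\tanh(\tfrac\beta2)+\xi_{2\lambda+1}+\psi^{+}_{2\lambda+1}$, and I would bound $\psi^{+}_{2\lambda+1}$ and $\xi_{2\lambda+1}$ just as in Section~\ref{sec:partition} (the bound on $\xi$ there holds for all $\lambda$; for $\psi^{+}$ one uses $\sum_{\gamma=0}^{2\lambda+1}(-1)^\gamma\ch(t(\tfrac12-\mu_\gamma))$, telescoping against the ordering $\mu_\gamma\le\mu_{\gamma+1}$, to get a bound by $\ch(\tfrac t2)^2$ or similar). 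Combined with the $-2g^2\tanh(\tfrac\beta2)$ term and the identities \eqref{eq:ident1}, this should give that the exponent of $\Xi_{2\lambda+1}$ is $\le 0$ with equality only on a null set, so \emph{every} term of the second bracket (including $\lambda=0$, for which $\Xi_1=\exp(-2g^2\tanh(\tfrac\beta2)+\cdots)\to 0$ pointwise a.e.) vanishes in the limit. Therefore $\lim_{g\to\infty}Z^{\pm}(\beta;g,\Delta)e^{-\tau\beta}=e^{-N\beta}/(1-e^{-\beta})$, and the Mellin transform gives $\zeta(s,N)$, independent of the sign and of $\Delta$ — consistent with Theorem~\ref{thm:ginf} since $\zeta(s,N)+\zeta(s,N)=2\zeta(s,N)=\lim_{g\to\infty}\ZARabi{0}(s;\tau)$.

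\textbf{Step 2 ($g\to0$).} Here I set $g=0$ in Proposition~\ref{cor:parityPart}, so $\Theta_{2\lambda}\to1$ and $\Xi_{2\lambda+1}\to1$ (all the $g^2$-weighted terms drop out). The iterated integrals collapse by the standard simplex-volume identity $\idotsint_{0\le\mu_1\le\cdots\le\mu_k\le1}d\bm\mu_k = 1/k!$, turning the first bracket into $\sum_\lambda (\beta\Delta)^{2\lambda}/(2\lambda)! = \ch(\beta\Delta)$ and the second into $\sum_\lambda (\beta\Delta)^{2\lambda+1}/(2\lambda+1)! = \sh(\beta\Delta)$. Thus $\lim_{g\to0}Z^{\pm}(\beta;0,\Delta)e^{-N\beta} = \dfrac{e^{-N\beta}\ch(\beta\Delta)}{1-e^{-\beta}} \mp \dfrac{e^{-N\beta}\sh(\beta\Delta)}{1+e^{-\beta}}$. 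Writing $\ch(\beta\Delta)=\tfrac12(e^{\beta\Delta}+e^{-\beta\Delta})$, $\sh(\beta\Delta)=\tfrac12(e^{\beta\Delta}-e^{-\beta\Delta})$, expanding the two denominators as $\sum\chi(n)^0 e^{-n\beta}$ and $\sum\chi(n)e^{-n\beta}$ respectively, and applying the Mellin transform $\tfrac1{\Gamma(s)}\int_0^\infty\beta^{s-1}e^{-a\beta}d\beta=a^{-s}$ term by term, yields exactly
\[
  \tfrac12\bigl\{\zeta(s,N+\Delta)+\zeta(s,N-\Delta)\bigr\} \mp \tfrac12\bigl\{L_\chi(s,N+\Delta)-L_\chi(s,N-\Delta)\bigr\},
\]
which is the claim; summing $Z^{+}$ and $Z^{-}$ recovers $\zeta(s,N+\Delta)+\zeta(s,N-\Delta)$, matching Example~\ref{ex1}.

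\textbf{Main obstacle.} The only genuinely new work is verifying the $g\to\infty$ decay of $\Xi_{2\lambda+1}$ — i.e.\ that $-2g^2\tanh(\tfrac\beta2)+\xi_{2\lambda+1}+\psi^{+}_{2\lambda+1}\le 0$ with equality on a null set — since the estimates in Section~\ref{sec:partition} were stated for $\psi^{-}$ and for $\xi$ with even index, and here one has the \emph{plus} variant $\psi^{+}$ and odd index $2\lambda+1$. I expect the telescoping argument to go through with the ordering $\mu_0=0\le\mu_1\le\cdots\le\mu_{2\lambda+1}\le1$, but getting the combination with the $-2g^2\tanh(\tfrac\beta2)=-2g^2(\ch\beta-1)/\sh\beta$ term to close up cleanly (rather than merely be bounded) is the step that needs care; this is also what justifies the dominated-convergence interchange of $\lim_{g\to\infty}$ with the integral and the $\lambda$-series. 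Everything else is the bookkeeping of geometric series and Mellin transforms already used repeatedly in the preceding sections.
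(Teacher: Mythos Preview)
Your proposal is correct and follows essentially the same route as the paper: Part~(1) is handled exactly as you describe (set $g=0$, collapse the simplex integrals to $1/k!$, sum to $\ch(\beta\Delta)$ and $\sh(\beta\Delta)$, expand $1/(1\pm e^{-\beta})$ and Mellin-transform), and for Part~(2) the first bracket is disposed of by the estimates already proved for Theorem~\ref{thm:ginf}.

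One refinement on your ``main obstacle'': you anticipate that the exponent of $\Xi_{2\lambda+1}$ will be $\le 0$ with equality on a null set, by analogy with the even case. What the paper actually shows is stronger and cleaner: one has $\xi_{2\lambda+1}+\psi^{+}_{2\lambda+1}\le 0$ \emph{on its own}, so that $\Xi_{2\lambda+1}\le \exp(-2g^2\tanh(\tfrac\beta2))$ uniformly. Thus the whole second bracket is dominated by a multiple of $e^{-2g^2\tanh(\beta/2)}$ and vanishes as $g\to\infty$ with no measure-zero argument at all. For this, your tentative bound $\psi^{+}_{2\lambda+1}\le \tfrac{4g^2}{\sh t}\ch^2(\tfrac t2)=2g^2\coth(\tfrac t2)$ is too crude (it exceeds $2g^2\tanh(\tfrac t2)$); the paper instead keeps the $\mu_{2\lambda+1}$-dependence, bounding the bracketed sum in $\psi^{+}_{2\lambda+1}$ by $\ch(\tfrac t2)-\ch(t(\tfrac12-\mu_{2\lambda+1}))=2\sh(\tfrac{t\mu_{2\lambda+1}}2)\sh(\tfrac t2(1-\mu_{2\lambda+1}))$, and this combines with the sharp bound on $\xi_{2\lambda+1}$ via \eqref{eq:ident1} to give exactly zero.
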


\begin{proof}
In the same way as in Example \ref{ex1}, one has 
 \begin{align*}
     \lim_{g\to 0} \ZRabi^{\pm}(s; \tau)& = \frac1{\Gamma(s)}\int_0^\infty \frac{\beta^{s-1} e^{-N \beta} \ch(\beta \Delta)}{1-e^{- \beta}}  d\beta
     \mp \frac1{\Gamma(s)}\int_0^\infty  \frac{\beta^{s-1} e^{- N \beta} \sh(\beta \Delta)}{1+e^{- \beta}}  d\beta.
       \end{align*}
Here we compute 
\[
\int_0^\infty \frac{\beta^{s-1} e^{-N \beta} \sh(\beta \Delta)}{1+e^{- \beta}}  d\beta
= \sum_{n=0}^\infty (-1)^n\int_0^\infty \beta^{s-1} e^{-(n+N)\beta} \sh(\beta \Delta) d\beta
\]
and, hence it follows that
 \begin{align*}
 \frac2{\Gamma(s)}\int_0^\infty \frac{\beta^{s-1} e^{-N \beta} \sh(\beta \Delta)}{1+e^{- \beta}}  d\beta 
& =  \sum_{n=0}^\infty (-1)^n(n+N+\Delta)^{-s}-\sum_{n=0}^\infty (-1)^n(n+N-\Delta)^{-s}\\
& =L_\chi(s,N+\Delta)-L_\chi(s,N-\Delta).
 \end{align*}
 This proves the first assertion.
 
 For the second, let us proceed as in the case of Theorem \ref{thm:ginf}. First, let us note that for $\lambda\geq0$, it is
 straightforward to verify
 \[
   \xi_{2\lambda+1}(\bm{\mu_{2 \lambda+1}},t) \leq  \frac{8g^2 }{\sh(t)} \sh^2\left(\tfrac12t(1-\mu_{2\lambda})\right) \left(1 - \ch(t \mu_{2\lambda+1}) \right),
 \]
 and
 \[
   \psi_\lambda^{+}(\bm{\mu_{2 \lambda + 1}},t) \leq \frac{4 g^2 }{\sh(t)} \left( \ch(\tfrac{t}2) - \ch(t(\tfrac12 - \mu_{2\lambda+1})) \right)^2.
 \]
 The following identity may also be easily verified by direct computation
 \[
   \ch(\tfrac{t}2) - \ch(t(\tfrac12 - \mu_{2\lambda+1})) = 2 \sh(\tfrac{t \mu_{2\lambda+1}}2) \sh(\tfrac{t}2(1- \mu_{2\lambda+1})),
 \]
 and therefore, using this identity and \eqref{eq:ident1} we see that $\xi_{2\lambda+1}(\bm{\mu_{2 \lambda+1}},t) + \psi_{2\lambda+1}^{+}(\bm{\mu_{2\lambda+1}},t)$ is
 bounded above by
 \begin{align*}
    & \frac{4g^2}{\sh(t)} \left( 2\sh^2\left(\tfrac12 t(1-\mu_{2\lambda+1})\right) \left(1 - \ch(t \mu_{2\lambda+1}) \right) + \left( \ch(\tfrac{t}2) - \ch(t(\tfrac12 - \mu_{2\lambda+1})) \right)^2\right) \\
    & = \frac{8g^2 2\sh^2\left(\tfrac12t(1-\mu_{2\lambda+1})\right) }{\sh(t)} \left( 1 - \ch(t \mu_{2\lambda+1}) + 2\sh^2(\tfrac{t}2(\mu_{2\lambda+1}) \right)\\
    & = \frac{8g^2 2\sh^2\left(\tfrac12t(1-\mu_{2\lambda+1})\right)}{\sh(t)} \left(1 - \ch(t \mu_{2\lambda+1}) + \ch(t \mu_{2\lambda+1}) - 1  \right) \\
                                                      &=0.
 \end{align*}
 Therefore, for $\lambda\geq 0$ we see that
 \[
   \Xi_{2\lambda+1} (g,\beta,\bm{\mu_{2 \lambda+1}}) \leq \exp(- 2g^2 \tanh(\tfrac{\beta}2)),
 \]
 and following the proof of Theorems \ref{thm:ginf} we obtain
  \[
    \lim_{g\to \infty}Z^{\pm}(\beta;g,\Delta)e^{-\beta (g^2+N)} =  \frac{e^{-\beta N}}{1 - e^{- \beta}},
  \]
    whence the result follows. 
\end{proof}

\begin{rem}
  By Proposition \ref{cor:parityPart} and the results of this section, we see that for $\tau=g^2+N$ we have
  \[
    \lim_{g \to \infty} \frac1{\Gamma(s)}\int_0^\infty \beta^{s-1}\left(Z^{-}(\beta) - Z^{+}(\beta)\right)e^{-\beta\tau} d \beta = 0.
  \]
  To observe the contribution of the $\Delta$ part as $g \to \infty$ we may consider a modification of the Mellin transform with
  respect to the bounded measure $\mu(\beta)$ on $[0, \infty)$ given by
  \[
    d \mu(\beta) = e^{2g^2 \tanh(\tfrac{\beta}2)}d\beta.
  \]
  Following the proof of Theorem \ref{thm:ginf}, we see that for $\lambda \geq 0$ the integrals
  that the integrals do not vanish and we obtain
  \[
    \lim_{g \to \infty} \frac1{\Gamma(s)}\int_0^\infty \beta^{s-1}\left(Z^{-}(\beta) - Z^{+}(\beta)\right) e^{-\beta \tau} d\mu(\beta) = 2 \Delta L_{\chi}(s+1,N) + H(s),
  \]
  for a function $H(s)$ holomorphic on $\Re(s)>1$.
\end{rem}

%  where we multiply the integrand by $e^{2g^2 \tanh(\tfrac{\beta}2)}$, or from a more general point of view, consider the Mellin transform with

%%%%%%%%%%%%%%%%%%%%%%%%%%%%%%%%%%%%%%
\section{Further remarks and discussion}
\label{sec:discussion}

In this section, we make some additional remarks related to the main results of this paper. First, we
discuss the convergence of the spectral curves and their behavior related to certain conjectures in \cite{RBW2022}.
In addition, we show that the spectral zeta function of the QRM (i.e. $\e=0$) gives an interpolation of the Riemann
zeta function $\zeta(s)$ between $g=0$ and the limit  $g \to \infty$. Finally, we show that the use of the
explicit form of the partition function allows new expressions of the spectral zeta function as infinite sums of Hurwitz zeta functions (or other Dirichlet $L$-series) by considering the limit $g \to 0$ as an example.

%%%%%%%%%%%%%%%%%%%
\subsection{Spectral curves and convergence}
\label{sec:spectralcurve}

As mentioned in the introduction, the behaviour of the spectrum at the limit $g \to \infty$ may be observed numerically in the spectral curves of the AQRM. In particular, in this limit the spectral curves (with respect to $g$  and the energy $E$) converge to the baselines $y= N \pm \e$. Numerically, at least for a small number of spectral curves the convergence appears to be rather fast. We also notice that according to the value of $\Delta$, the spectral curves may cross the baselines several times (we observe that the number of crossings is $O(N)$) before the convergence as illustrated in Figure \ref{fig:curve3}. % (if the spectral curves do not contain the shift $g^2$ it is difficult to observe the convergence of the spectral curves for large $\Delta$, e.g. Figure 4 of \cite{RBW2022}).

\begin{figure}[ht]
  \centering
  \subfloat[$\Delta=1$]{
    \includegraphics[height=4.5cm]{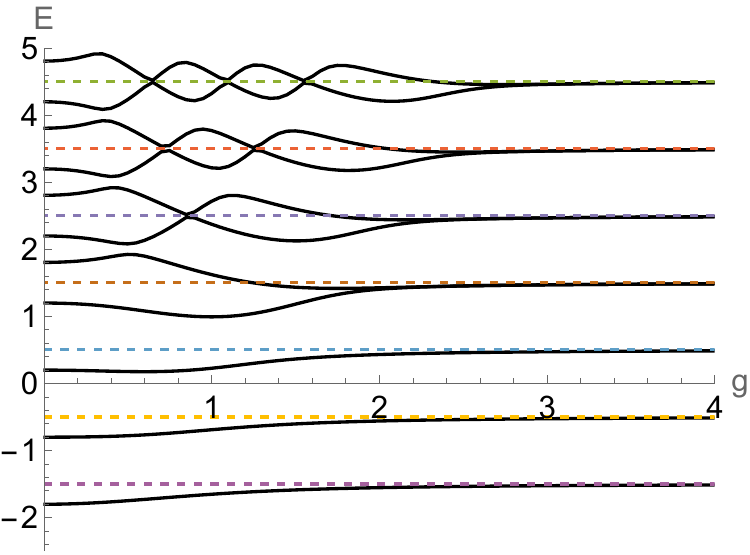}}
  ~ \qquad \qquad
  \subfloat[$\Delta=2$]{
    \includegraphics[height=4.5cm]{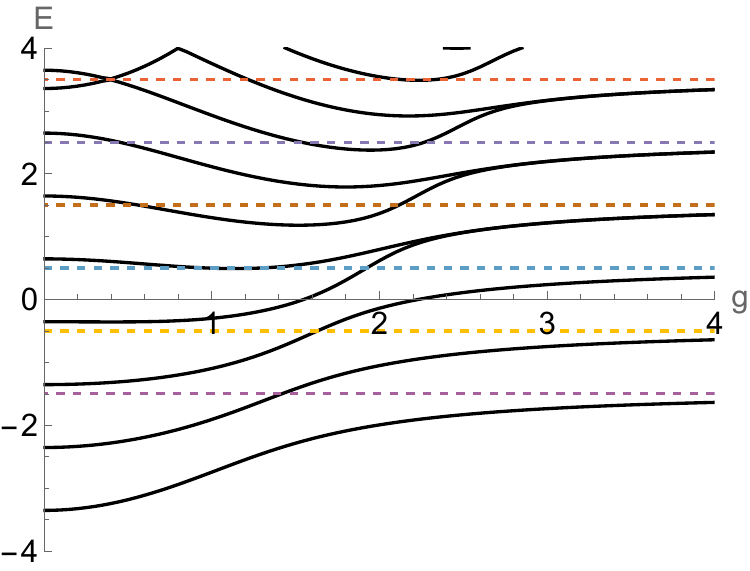}}
  \caption{Spectral curves of $\HRabi{\e}+g^2$ for $\e = 3/2$ for different $\Delta$}
  \label{fig:curve3}
\end{figure}

The results of this paper supports the conjectures of \cite{RBW2022}  regarding the symmetry of the AQRM when
$\e$ is an half integer, that is $\e = \tfrac{\ell}2$ for $\ell \in \Z$. In particular, it has been shown the existence of a commuting operator $J_\ell$ satisfying
\[
  [J_\ell, \HRabi{\tfrac{\ell}2}]=0,
\]
(see \cite{RBW2022} and also \cite{MBB2020} for the discovery of such operators). Let us briefly recall the main conjecture in \cite{RBW2022}. For each half-integer $\e =\frac{\ell}2$, the conjecture proposes the existence of a polynomial $p_\ell(x;g,\Delta) \in \Q[g,\Delta]$ of degree $\ell$ that links the existence of spectral degeneracy and symmetry operators, and such that the associated curve (or surface)
\begin{align}
  \label{eq:helliptic}
  y^2 = p_\ell(x;g,\Delta),
\end{align}
controls the joint spectrum of $H_\e$ and the commuting operator $J_\ell$. More precisely, we have 
\[
  J_\ell^2 = p_\ell(H_{\frac{\ell}2};g,\Delta),
\]
and the polynomial $p_\ell(x;g,\Delta)$ is essentially given by the quotient of two constraint polynomials corresponding to degenerate eigenvalues of the AQRM \cite{KRW2017}.

The conjecture also shows that the hyperelliptic curve defined by the equation \eqref{eq:helliptic} provides an
excellent approximation of the first $\ell$ eigenvalues of the spectrum as shown in Figure \ref{fig:curve4}.
Following this conjecture, it is expected that the first $\ell$ eigenvalues of the AQRM are non-degenerate. This is consistent with the results of the spectral zeta limit as these correspond to the first $\ell$ summands of the series in the definition of $\zeta(s,N-\tfrac{\ell}{2})$.

Moreover, in \cite{RBW2022} it is noted that if ${\rm Ker}(J_\ell)$ is trivial, then the commuting operator $J_\ell$ can be
normalized to define an involution operator. Note that the statement $\dim {\rm Ker}(J_\ell) = 0$ is equivalent to
\[
  p_\ell(\lambda; g,\Delta) \neq 0,
\]
for all eigenvalues $\lambda \in \Spec(\HRabi{\e})$, which is difficult to verify (especially for the first $\ell$
eigenvalues). The results of this paper (along with the result of  Hiroshima and Shirai on the weak limit) suggest 
that at the limit as $g \to \infty$ of the parity operator $J_\ell$ must be $J_\ell =  e^{-i \pi a^{\dag} a}$ and thus $J_\ell^2 = 1$ and $\dim {\rm Ker}(J_\ell) = 0$, supporting the conjecture in \cite{RBW2022}.

\begin{figure}[ht]
  \centering
  \subfloat[$\e=\tfrac52, \Delta=1$]{
    \includegraphics[height=4.5cm]{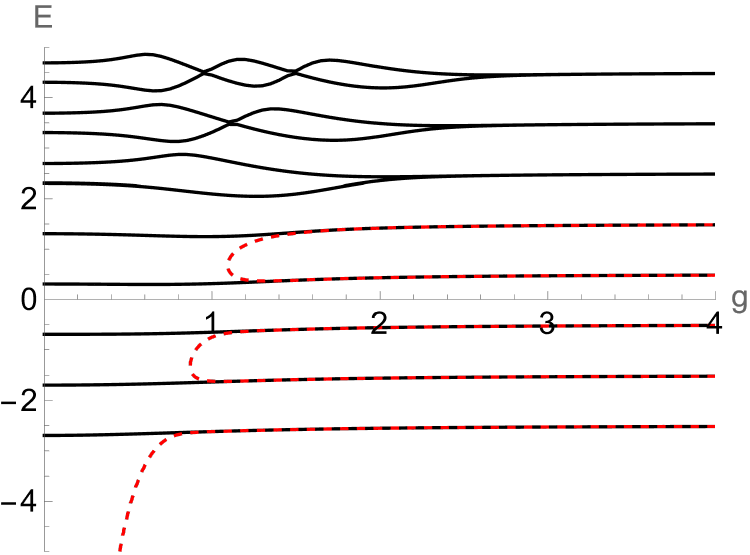}}
  ~ \qquad \qquad
  \subfloat[$\e=\tfrac{10}2, \Delta=2$]{
    \includegraphics[height=4.5cm]{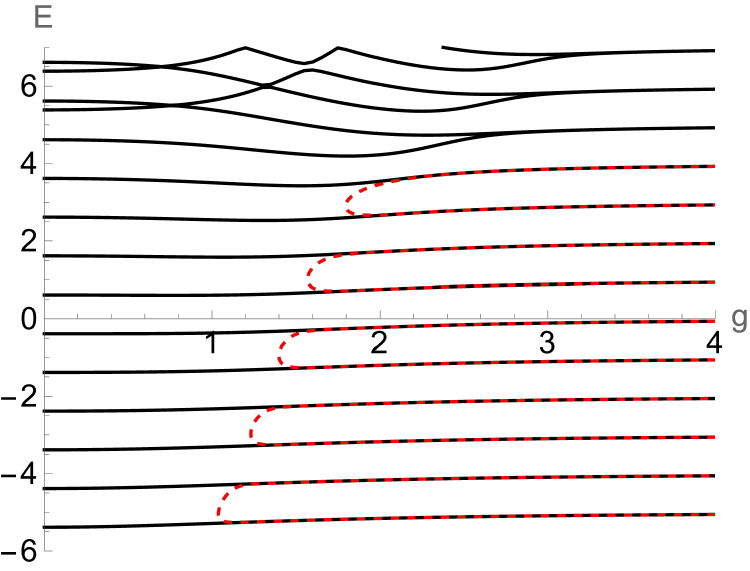}}
  \caption{Excellent approximation of spectral curves}
  \label{fig:curve4}
\end{figure}

%%%%%%%%%%%%%%%%%%%
\subsection{Spectral zeta interpolation of $\zeta(s)$}
\label{sec:interpolationt}

The Hamiltonian of the QRM may be considered as a (non-commutative) generalization of the quantum harmonic oscillator.
Indeed, by making the parameters $g=\Delta=0$, the Hamiltonian of the QRM is just a pair of uncoupled quantum harmonic oscillators (up to a constant shift).

The same holds for the spectral zeta function $\ZARabi{0}(s; \tau)$, since by Example \ref{ex1} we see that
\[
  \lim_{\substack{ g\to 0 \\ \Delta \to 0 }} \ZARabi{0}(s; g^2+1)=2\zeta(s).
\]
Therefore, it is natural to regard the spectral zeta function of the QRM as a reasonable generalization of the
Riemann zeta function.

Moreover, the the results of this paper show that the spectral zeta function of the QRM \( \ZARabi{0}(s; g^2+1)\),
for fixed $0< \Delta \leq 1$  may be considered as a type of interpolation of the Riemann zeta function with respect to the parameter $g$. In this case, by Theorem \ref{thm:ginf} we actually observe that
\[
  \lim_{g\to \infty} \ZARabi{0}(s; g^2+1)=2 \zeta(s).
\]

From this point of view it is not unreasonable to raise questions such as how the zeros of the zeta function behave under the
interpolation (both trivial and non-trivial zeros) and the perturbation of the Euler product in the intermediate values $g>0$.
A detailed study and clarification of the nature and consequences of the interpolation is beyond the scope of the present paper.
Here, we just limit ourselves to point out an approach that may be considered a first step towards this goal.

It is a classical result that the special value of $\zeta(s)$ at the negative integer $-n$ is given by the Bernoulli
number $B_{n+1}$. Concretely, we have
\[
\zeta(-n)= -\frac{B_{n+1}}{n+1} \quad (n\geq1),
\] 
where $B_k=B_k(0)$, the $k$-th Bernoulli number.

The Rabi-Bernoulli polynomials/numbers from the Appendix A in \cite{RW2021} are defined in the same way to express the special values
of the spectral zeta function of the QRM  (these polynomials are initially introduced in \cite{Sugi2016} by another way of the analytic continuation of $\ZARabi{0}(s;\tau)$).

For $k\geq1$, we define the Rabi-Bernoulli polynomials $(RB)_k(\tau, g^2, \Delta)$ (see Lemma A.3 \cite{RW2019}) by the equation
\[
  \ZARabi{0}(1-k; \tau)=-\frac{2}{k}(RB)_k(\tau, g^2, \Delta).
\]
for $k\geq 0$. Here, for simplicity and to follow the assumptions of this paper, we take $\tau=g^2+1$ and consider
\[
  (RB)_k(g^2, \Delta) := (RB)_k(g^2+1, g^2, \Delta),
\]
thus the first few Rabi-Bernoulli polynomial are given by
\begin{align*}
    (RB)_1(g^2,\Delta) &= B_1,\\
    (RB)_2(g^2,\Delta) &= B_2 + \Delta^2, \\
    (RB)_3(g^2,\Delta) &= B_3+ 3 \Delta^2 B_1+ 2g^2 \Delta^2. \\
\end{align*}
Note that directly from the expression of the Rabi-Bernoulli polynomials, it is not clear how to interpret the interpolation. Let $\Omega(\beta)$ by defined by the equation
\[
  \PARabi{0}(\beta) = \frac{\Omega(t)}{1-e^{-\beta}},
\]
then, more precisely speaking, the $k$-th Rabi-Bernoulli polynomial $(RB)_k(\tau, g^2, \Delta) \in \R[\tau, g^2, \Delta^2] $ is defined
through the equation  
\begin{equation*}
  \frac{w\Omega(w)e^{-(g^2+1) w}}{1-e^{-w}}= 2 \sum_{k=0}^\infty\frac{(-1)^k(RB)_k(g^2, \Delta)}{k!}w^k,
\end{equation*}
and thus it is possible to interpret the limit $g \to \infty$ of the Rabi-Bernoulli polynomials using the methods of this
paper to obtain the usual Bernoulli polynomials.

Moreover, the Rabi-Bernoulli polynomial $(RB)_k(\tau, g^2, \Delta)$ is actually given by a the classical Bernoulli
polynomials. The following results give the explicit formula and allows the computation of the Rabi-Bernoulli
polynomials given above directly. Recall the Bernoulli polynomials \(B_k(x)\) are given by
\[
  \frac{w e^{-x w}}{1-e^{-w}}=  \sum_{k=0}^\infty\frac{(-1)^kB_k(x)}{k!}w^k.
\]

\begin{prop}\cite{RW_RB2023}  \label{prop:Bexp}
 
For \(k \geq 0 \), we have
 \[
    (RB)_k(g^2,\Delta) = (-1)^k k! \left( \sum_{i=0}^k \frac{(-1)^{k-i}}{(k-i)!} \Omega^{(i)} B_{k-i}(1) \right), 
  \] 
where $\Omega^{(i)}$ are defined as the coefficients of $\Omega{(\beta)}$ in the expansion  
\[
  \Omega{(\beta)}= 2e^{g^2}\left( \sum_{j=0}^{\infty} \Omega^{(j)} \beta^j \right).  \qed
\]
\end{prop}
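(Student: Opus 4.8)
The plan is to read off the formula as a Cauchy product of two generating series; everything can be carried out at the level of formal power series in $w$, since $(RB)_k$, the polynomials $B_m(x)$, and the coefficients $\Omega^{(i)}$ are all defined through such series. First I would start from the defining relation of the Rabi--Bernoulli polynomials,
\[
  \frac{w\,\Omega(w)\,e^{-(g^{2}+1)w}}{1-e^{-w}} = 2\sum_{k=0}^{\infty}\frac{(-1)^{k}(RB)_{k}(g^{2},\Delta)}{k!}\,w^{k},
\]
and substitute the expansion of $\Omega$ with its factor $e^{g^{2}w}$ made explicit, namely $\Omega(w)=2e^{g^{2}w}\bigl(\sum_{j\ge 0}\Omega^{(j)}w^{j}\bigr)$. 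The decisive simplification is the exponential cancellation $e^{g^{2}w}\cdot e^{-(g^{2}+1)w}=e^{-w}$, which removes $g^{2}$ from the exponential prefactor and reduces the left-hand side to $2\bigl(\sum_{j\ge 0}\Omega^{(j)}w^{j}\bigr)\cdot\dfrac{w\,e^{-w}}{1-e^{-w}}$.

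Next I would recognize $\dfrac{w\,e^{-w}}{1-e^{-w}}$ as the Bernoulli generating function evaluated at $x=1$, i.e.\ $\dfrac{w\,e^{-w}}{1-e^{-w}}=\sum_{m\ge0}\dfrac{(-1)^{m}B_{m}(1)}{m!}\,w^{m}$. Forming the Cauchy product of the two series, the coefficient of $w^{k}$ on the left-hand side becomes $2\sum_{i=0}^{k}\Omega^{(i)}\dfrac{(-1)^{k-i}B_{k-i}(1)}{(k-i)!}$; equating it with the coefficient $2\,\dfrac{(-1)^{k}(RB)_{k}}{k!}$ on the right-hand side and solving for $(RB)_{k}$ (using $(-1)^{-k}=(-1)^{k}$) yields exactly the asserted formula. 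A useful check is to recover the first few Rabi--Bernoulli polynomials from $\Omega^{(0)}=1$, $\Omega^{(1)}=0$, $\Omega^{(2)}=\tfrac12\Delta^{2},\dots$, which one reads off directly from the series of the Proposition of Section~\ref{sec:partition}; the same input shows $\Omega^{(i)}\in\R[g^{2},\Delta^{2}]$, so the formula simultaneously re-proves that $(RB)_k$ is a polynomial in $g^{2}$ and $\Delta^{2}$.

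Conceptually there is no genuine obstacle: the whole content is the cancellation $e^{g^{2}w}e^{-(g^{2}+1)w}=e^{-w}$ together with the identification of $\tfrac{we^{-w}}{1-e^{-w}}$ with the Bernoulli series at $x=1$, after which the identity falls out. The only point that requires attention is the careful tracking of the alternating signs and the factorials in the Cauchy product. If one wishes to read the argument analytically rather than purely formally, one should additionally observe that $\Omega(\beta)=(1-e^{-\beta})\PARabi{0}(\beta)$ is holomorphic at $\beta=0$ — the apparent poles $-2g^{2}\coth(\tfrac\beta2)$ and $4g^{2}\ch(\beta(1-\mu_{2\lambda}))/\sh(\beta)$ inside each $\Theta_{2\lambda}$ cancel to leading order as $\beta\to0$, as is already visible in \eqref{eq:ineq1} — so that $\Omega$ genuinely has the Taylor expansion used above and the rearrangement of series is legitimate.
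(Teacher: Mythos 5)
Your argument is correct, and since the paper states this proposition without proof (deferring to \cite{RW_RB2023}, in preparation), your Cauchy-product derivation is exactly the intended one: the identity is an immediate consequence of comparing coefficients in the defining generating function once the factor $e^{g^2w}$ is peeled off $\Omega$. You are also right to read the paper's displayed definition $\Omega(\beta)=2e^{g^2}\bigl(\sum_j\Omega^{(j)}\beta^j\bigr)$ as a typo for $2e^{g^2\beta}\bigl(\sum_j\Omega^{(j)}\beta^j\bigr)$ — that is what produces the cancellation $e^{g^2w}e^{-(g^2+1)w}=e^{-w}$ and makes the stated formula consistent with the listed examples $(RB)_2=B_2+\Delta^2$, $(RB)_3=B_3+3\Delta^2B_1+2g^2\Delta^2$ — and your closing remark on the holomorphy of $\Omega$ at $\beta=0$ is the only analytic point that needs checking.
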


From this observation, we find that the special value of spectral zeta function
\(
  \ZARabi{0}(1-k; g^2+1)
\)
is (up to the term $\Omega{(k)}$) described as a linear combination  of $\zeta(1-j)$ for $1\leq j\leq k$. From this fact, we may expect that a ``major part'' of $\ZARabi{0}(s; g^2+1)$ is given by the finite linear combination of shifts of $\zeta(s)$. We leave
the study of this problem in another occasion.

\subsection{Spectral zeta limit for $g\to0$ as an infinite series of zeta functions }
\label{sec:epsNZ}

In Section \ref{sec:interpolationt} we discussed a possible relationship between the spectral zeta limit with (sums) of the zeta functions for the QRM case. Here, we consider the AQRM (that is, $\e \neq 0$) and return to the case already described in Example \ref{ex:epsNZ} and see that in  the limit $ g \to 0$ the spectral zeta function is given by infinite series of Hurwitz zeta functions. Therefore, while the direct evaluation of the zeta limit of the spectrum may be difficult, the use of the Mellin transform integral expression of the spectral zeta function actually reveals new information.

Concretely, we have
\begin{align}
  \label{eq:multiHZ}
  \lim_{g\to 0} \ZARabi{\e}(s; \tau) 
  =& \zeta(s,N+\e)+ \zeta(s,N-\e) \nonumber \\
   & + \sum_{\lambda=1}^\infty  \Delta^{2\lambda}\sum_{0< p<2\lambda}[a_p\zeta(s-p,N+ \e)+b_p\zeta(s-p, N-\e)  
\end{align}
for some rational numbers $a_p, b_p$. To see this, it is enough to evaluate the integral 
\begin{align}\label{Intergral_AQRM}
  \idotsint\limits_{0\leq \mu_1 \leq \cdots \leq \mu_{2 \lambda} \leq 1}  \ch\left[\e \beta \left( 1-2 \sum_{\gamma=1}^{2\lambda} (-1)^{\gamma} \mu_{\gamma}\right) \right]
  d \bm{\mu_{2 \lambda}}.
\end{align}
Note that all the iterated integrals appearing in the expression \eqref{Intergral_AQRM} are evaluated at $1$,
$\mu_i$ (for $i=2\lambda, 2\lambda-1, \ldots, 1$ ), and $0$ at both ends of the interval in each step.

In practice, by defining the $J_\lambda(t)$ by 
\[
  J_\lambda(t):=  \idotsint\limits_{0\leq \mu_1 \leq \cdots \leq \mu_{\lambda} \leq 1}  \exp\left[t \left( 1-2(-1)^\lambda \sum_{\gamma=1}^{\lambda} (-1)^{\gamma} \mu_{\gamma}\right) \right] d \bm{\mu_{\lambda}}, \qquad (\lambda=1,2,3, \ldots), 
\]
it is immediate to see that the integral \eqref{Intergral_AQRM} is given by $\frac12(J_{2\lambda}(\e \beta ) + J_{2\lambda}(-\e \beta))$.

Next, we note by partial integration that 
\[
  J_1(t)= \frac2{2t}\sh(t), \quad J_2(t)=-\frac2{(2t)^2}\sh(t)+ \frac1{2t}e^t= -\frac1{2t}J_1(-t)+\frac1{2t}J_0(t)
\]
upon defining $J_0(t):=e^t$. Similarly, by multiple application of partial integration  (i.e. $\binom{\lambda}{2}$-times) we
obtain
\begin{align*}
  J_\lambda(t)= \sum_{k=1}^{[\frac{\lambda+ 1}2]} p_k^{(\lambda)}(t^{-1})J_{\lambda-(2k-1)}(-t) +\sum_{k=1}^{[\frac{\lambda}2]} q_k^{(\lambda)}(t^{-1})J_{\lambda-2k}(t) 
\end{align*}
for polynomials $p_k^{(\lambda)}(x)$ and $q_k^{(\lambda)}(x)$ with degree less than or equal to $\lambda$. In particular, it is not difficult to see that $q_{\lambda}^{(2\lambda)}(x)=0$. Actually, we verify the expression of $J_\lambda(t)$ above by noting that the sign change occurs when evaluating at $1$ of the interval $[\mu_k,\,1]$ in each step of the integration and then proceeding by induction. 

Then, it follows  from the expression of $J_{2\lambda}(t)$ above that for $\lambda\in\Z_{>0}$ each coefficient of $(\beta\Delta)^{2\lambda}$ in
$\Omega(t)= (1-e^{-\beta}) \PARabi{\e}(\beta)$ is given by sums of the forms $\varphi_1^{\lambda}((\e\beta)^{-1})\sh(\e \beta)$, $\varphi_2^{\lambda}((\e\beta)^{-1})\ch(\e \beta)$ where
$\varphi_j^{\lambda}(x)\,\, (j=1,2)$ are polynomials in $x$ with rational coefficients of degree less than $2\lambda$.

A few examples are
\begin{align*}
  \varphi_1^{1}(x)=\frac12x, \; \varphi_2^{1}(x)=0 & \quad {\text{i.e., the coefficient of}} \; \beta\Delta =  \frac1{2\e\beta}\sh(\e \beta),\\
  \varphi_1^{2}(x)= -\frac1{8}x^3, \; \varphi_2^{2}(x)= \frac18 x^2  &\quad {\text{i.e., the coefficient of}} \; (\beta\Delta)^2 =\frac1{2(2\e\beta)^2}\ch(\e \beta)-\frac1{(2\e\beta)^3}\sh(\e \beta).
\end{align*}
Also, we find that the highest degree term of $\varphi_1^{\lambda}((\e\beta)^{-1})$ is $(-1)^{\lambda-1}(2\e\beta)^{-2\lambda+1}$ and the degree of the polynomial $\varphi_2^{\lambda}(x)$ is $2\lambda-2$. Indeed, the former claim holds since there are $\lambda$-times sign changes arising from the exponent $t(-1)^\gamma \bm{\mu}_\gamma$ of the integrand of $J_{2\lambda}(t)$ by the $(2\lambda-1)$-times iterated integrals.
The latter follows clearly from the fact $q_{\lambda}^{(2\lambda)}(x)=0$. Hence the assertion in \eqref{eq:multiHZ} follows by taking the Mellin transform.

\section*{Acknowledgements}

The authors would like to thank Fumio Hiroshima and Daniel Braak for discussions related to this research during the conference ``Rabi and Spin Boson models'' held in Kyushu University in January 2023.

This work was partially supported by JSPS Grant-in-Aid for Scientific Research (C) No.20K03560, JSPS Grant-in-Aid for Early-Career Scientists No. 24K16941, and CREST JPMJCR2113, Japan.

% \section*{Author Declarations}

% \subsection*{Conflict of Interest}

% The authors have no conflicts to disclose.

% \subsection*{Author Contributions}

% \textbf{Cid Reyes-Bustos}: Conceptualization (equal); Formal Analysis (equal); Visualization (lead); Writing–review \& editing (equal).  \textbf{Masato Wakayama}: Conceptualization (equal); Formal Analysis (equal); Writing–review \& editing (equal).

% \section*{Data Availability}

% Data sharing is not applicable to this article as no new data were created or analyzed in this study.

%------------------------------------------------------

\begin{flushleft}
  
\bigskip

  Cid Reyes-Bustos \\
  NTT Institute for Fundamental Mathematics,\\
  NTT Communication Science Laboratories, NTT Corporation \\
  %NTT Atsugi R\&D Center \\
3-9-11, Midori-cho Musashino-shi, Tokyo, 180-8585, Japan \\
  email: \texttt{cid.reyes@ntt.com, math@cidrb.me}
  \phantom{a}\\\phantom{a}\\

  Masato Wakayama \\
  NTT Institute for Fundamental Mathematics,\\
  NTT Communication Science Laboratories, NTT Corporation \\
  %NTT Atsugi R\&D Center \\
  3-9-11, Midori-cho Musashino-shi, Tokyo, 180-8585, Japan \\
  and\\
  Institute for Mathematics for Industry,  Kyushu University, \\
  744 Motooka, Nishi-ku, Fukuoka 819-0395\, Japan\\
  email: \texttt{wakayama@imi.kyushu-u.ac.jp, masato.wakayama.hp@hco.ntt.co.jp}
  
\end{flushleft}


\begin{thebibliography}{99}
%------------------------------------------------------

\bibitem{AYH1970}
P.~W.~Anderson, G.~Yuval, and D.~R.~Hamann:
\textit{Exact Results in the Kondo Problem. II. Scaling Theory, Qualitatively Correct Solution, and Some New Results on One-Dimensional Classical Statistical Models}, 
Phys. Rev. B \textbf{1}, (1970), 4464.                 

\bibitem{BZ2021}
  A.~Boutet de Monvel and L.~Zielinski:
  \textit{Oscillatory Behavior of Large Eigenvalues in Quantum Rabi Models},
  Int. Math. Res. Noticess \textbf{2021:7} (2021), 5155–5213.

\bibitem{B2011}
  D.~Braak:
  \textit{Integrability of the Rabi Model},
  Phys. Rev. Lett. \textbf{107} (2011), 100401.
 
 %\bibitem{B2023}
 %  D.~Braak: Private communications, January 2023.
   
 \bibitem{bcbs2016}
   D.~Braak, Q.H.~Chen, M.T.~Batchelor and E.~Solano:
   \textit{Semi-classical and quantum Rabi models: in celebration of 80 years},
      J. Phys. A: Math. Theor. \textbf{49} (2016), 300301.

% \bibitem{BLZ2015}
%   M.T.~Batchelor, Z.-M.~Li and H.-Q.~Zhou:
%   \textit{Energy landscape and conical intersection points of the driven Rabi model},
%   J. Phys. A: Math. Theor. \textbf{49} (2015), 01LT01.

% \bibitem{CZ2010}
%   P.~Corvaja and U.~Zannier:
%   \textit{Integral points, divisibility between values of polynomials and entire curves on surfaces},
%   Adv. Math. \textbf{225} (2010), 1095-1118.
  
% \bibitem{C2011}
%   S.~Caux and J.~Mossel:
%   \textit{Remarks on the notion of quantum integrability},
%   J. Stat. Mech. \textbf{2011} (2011), P02023.

\bibitem{HR2008}
  S.~Haroche and J.~M.~Raimond:
  Exploring the Quantum - Atoms, Cavities and Photons, Oxford University Press, 2008.
      
\bibitem{HH2012}
 M.~Hirokawa and F.~Hiroshima:
 \textit{Absence of energy level crossing for the ground state energy of the Rabi model},
 Comm. Stoch. Anal. \textbf{8} (2014), 551-560.
  
\bibitem{HHL2014}
M.~Hirokawa and F.~Hiroshima and J.~L\"orinczi
\textit{Spin-boson model through a Poisson-driven stochastic process}, 
Math. Z. \textbf{277}  (2014), 1165–1198.

\bibitem{H2023}
  F.~Hiroshima and T.~Shirai:
 \textit{Renormalized spectral zeta function of Rabi model},
 Preprint 2024. arXiv:2405.09158 [math-ph].

% \bibitem{HT1992}
% R.~Howe and E.~C.~Tan:
% Non-Abelian Harmonic Analysis. Applications of $SL(2,\R)$, Springer, 1992.

\bibitem{JC1963}
  E.T.~Jaynes and F.W.~Cummings:
  \textit{Comparison of quantum and semiclassical radiation theories with application to the beam maser},
  Proc. IEEE  \textbf{51} (1963), 89-109.

\bibitem{Kac1968}
  M.~Kac:
  \textit{Can One Hear the Shape of a Drum?},
  The American Mathematical Monthly, vol. 73 (1966), Part II, 1-23.
  
\bibitem{KRW2017}
  K.~Kimoto, C.~Reyes-Bustos and M.~Wakayama:
  \textit{Determinant expressions of constraint polynomials and degeneracies of the asymmetric quantum Rabi model}.
  Int. Math. Res. Notices Vol. 2021, Issue \textbf{12}, 9458-9544 (2021). % Published online April 2020.


\bibitem{K1985JMP}
  M.~Ku\'s:
  \textit{On the spectrum of a two-level system}, J. Math. Phys., \textbf{26} (1985), 2792-2795.

\bibitem{LCDFGZ1987}
  A.~J.~Legget, S.~Chakravarty, A.~T.~Dorsey, M.~P.~A.~Fisher, A.~Garg and W.~Zwerger:
  \textit{Dynamics of the dissipative two-state system},
  Rev. Mod. Phys. \textbf{59} (1987), 1- 85.
  
 \bibitem{LB2015JPA}
 Z.-M.~Li and M.T.~Batchelor:
\textit{Algebraic equations for the exceptional eigenspectrum of the generalized Rabi model},
J. Phys. A: Math. Theor. \textbf{48} (2015), 454005.

% \bibitem{LB2020}
%   Z.-M.~Li and M.T.~Batchelor:
%   \textit{Hidden symmetry and tunneling dynamics in asymmetric quantum Rabi models},  
%   Phys. Rev. A \textbf{103} (2021), 023719.

  

\bibitem{MBB2020}
  V.~V.~Mangazeev, M.~T.~Batchelor and V.~V.~Bazhanov:
  \textit{The hidden symmetry of the asymmetric quantum Rabi model},
  J. Phys. A: Math. Theor. \textbf{54} (2021), 12LT01. 

\bibitem{MM2023}
M.~Malagutti:
\textit{On the spectral zeta function on certain semiregular systems},
Bulletin des Sciences Mathématiques, Volume \textbf{187} (2023), 103286.

%ON THE SPECTRAL ZETA FUNCTION OF CERTAIN SEMIREGULAR SYSTEMS MARCELLO MALAGUTTI

\bibitem{Ni2010}
T.~Niemczyk {\it et al.}:
 \textit{Beyond the Jaynes-Cummings model: circuit QED in the ultrastrong coupling regime},
  Nature Physics \textbf{6} (2010), 772-776.

% \bibitem{R1936}
% I.~I.~Rabi:
% \textit{On the process of space quantization},
% Phys. Rev. \textbf{49} (1936), 324. 

% \bibitem{R1972}
%   M.~Reed and B.~Simon:
%   \textit{Methods of Modern Mathematical Physics I: Functional Analysis},
%   Academic Press, New York (1972). 

\bibitem{R2020}
C.~Reyes-Bustos:
\textit{The heat kernel of the asymmetric quantum Rabi model}, 
J. Phys. A: Math. Theor. \textbf{56} (2023), 425302.

\bibitem{RW2019}
C.~Reyes-Bustos and M.~Wakayama:
\textit{The heat kernel for the quantum Rabi model},
Adv. Theor. Math. Phys. \textbf{26} (5), (2022), 1347-1447.
 
\bibitem{RW2021}
C.~Reyes-Bustos and M.~Wakayama:
\textit{Heat kernel for the quantum Rabi model: II. Propagators and spectral determinants}, J. Phys. A: Math. Theor. \textbf{54}
(2021), 115202.
  
\bibitem{RBW2022}
  C.~Reyes-Bustos and M.~Wakayama:
  \textit{Degeneracy and hidden symmetry for the asymmetric quantum Rabi model with integral bias},
  Comm. Numb. Theor. Phys. \textbf{16} (3),  (2022), 615-672.

\bibitem{RW_RB2023}
C.~Reyes-Bustos and M.~Wakayama:
\textit
{Special values of the spectral zeta functions for the asymmetric quantum Rabi model}, In preparation.

% \bibitem{EVBSS2017}
% D.~Rossatto, C-J.~Villas-B\^oas, M.~Sanz and E.~Solano:
% \textit{Spectral classification of coupling regimes in the quantum Rabi model}, 
% Phys. Rev. A \textbf{96} (2017), 013849.
 
\bibitem{Sch1985}
M.~Schmutz:
\textit{Two-level system coupled to a boson mode: the large n limit}, J. Phys. A: Math. Gen. \textbf{19} (1986), 3565-3577.

% \bibitem{Sc1967AP}
%    S.~Schweber:
%   \text{On the application of Bargmann Hilbert spaces to dynamical problems}, Ann. Phys. \textbf{41} (1967), 205-229.

% \bibitem{SK2017}
% J.~Semple and M.~Kollar:  
%   \textit{Asymptotic behavior of observables in the asymmetric quantum Rabi model},
%   J. Phys. A: Math. Theor. \textbf{51} (2017), 044002.

% \bibitem{S1994}
%   J.~H.~Silverman:
%   Advanced Topics in the Arithmetic of Elliptic Curves,
%  GTM \textbf{151}, Springer, 1994.

\bibitem{Sugi2016}
  S.~Sugiyama:
  \textit{Spectral zeta functions for the quantum Rabi models},
  Nagoya Math.~J. \textbf{229} (2018), 52--98.

% \bibitem{T1951}
%   E. C.~Titchmarsh:
%   The Theory of the Riemann Zeta-function, Oxford Univ. Press 1951.


% \bibitem{VR1976}
% M.~Vergne and H.~Rossi:
% \textit{Analytic continuation of the holomorphic discrete series of a semi-simple Lie group}, 
% Acta Math. \textbf{136} (1976), 1-59. 

\bibitem{W2016JPA}
M.~Wakayama:  
\textit{Symmetry of asymmetric quantum Rabi models}. J. Phys. A: Math. Theor.
\textbf{50} (2017), 174001.

% \bibitem{XC2021}
%   Y.-F.~Xie and Q.~H.~Chen:
%   \textit{Double degeneracy associated with hidden symmetries in the asymmetric two-photon Rabi
%     model},
%   Preprint 2021.  arXiv:2102.03944v2 [quant-ph].

\bibitem{Y2017}
 F.~Yoshihara \textit{et al.}:
  \textit{Superconducting qubit-oscillator circuit beyond the ultrastrong-coupling regime},
    Nature Physics \textbf{13} (2017), 44.

\bibitem{YS2018}
F.~Yoshihara, T.~Fuse, Z.~Ao, S.~Ashhab, K.~Kakuyanagi, S.~Saito, T.~Aoki, K.~Koshino, and K.~Semba:  
\textit{Inversion of Qubit Energy Levels in Qubit-Oscillator Circuits in the Deep-Strong-Coupling Regime}, 
Phys. Rev. Lett. \textbf{120} (2018), 183601. 

\end{thebibliography}
\end{document}